\definecolor{checkcolor}{HTML}{005AB5}
\definecolor{crosscolor}{HTML}{DC3220}
\newcommand{\cmark}{\textcolor{checkcolor}{\ding{51}}}%
\newcommand{\xmark}{\textcolor{crosscolor}{\ding{55}}}%
\newcommand{\argmin}{\operatorname*{arg\,min}}
\newcommand{\argmax}{\operatorname*{arg\,max}}
\newtheorem{theorem}{Theorem}[section]
\newtheorem{proposition}[theorem]{Proposition}
\newtheorem{definition}[theorem]{Definition}
\newcommand{\mathify}[1]{\ensuremath{#1}\xspace}
\newcommand{\adv}{\mathify{\mathcal{A}}} %
\newcommand{\mech}{\mathify{\mathcal{M}}} %
\newcommand{\ds}{\mathify{\mathbf{x}}} %
\newcommand{\mechout}{\mathify{a}}
\newcommand{\Mechout}{\mathify{A}}
\newcommand{\dsfixed}{\mathify{\mathbf{x}_{fixed}}} %
\newcommand{\dd}{\mathify{\mathcal{D}}} %
\newcommand{\class}{\mathify{\mathcal{C}}} 
\newcommand{\E}{\mathify{\mathbb{E}}} %
\newcommand{\I}{\mathify{\mathbb{I}}} %
\newcommand{\targets}{\mathify{\mathbf{z}}} %
\newcommand{\guesses}{\mathify{\targets}}
\newcommand{\cZ}{\mathify{\mathcal{Z}}} %
\newcommand{\cX}{\mathify{\mathcal{X}}} %
\newcommand{\R}{\mathify{\mathbb{R}}} %
\newcommand{\eps}{\varepsilon} %
\newcommand{\bern}{\mathsf{Bernoulli}}
\newcommand{\metric}{\mathify{\mathcal{L}}} %
\newcommand{\sd}{\mathify{\preceq_{s.d.}}}
\title{A Unified Framework for Adversary-Aware Differential Privacy Bounds }
\author{%
  Marika Swanberg\thanks{Equal contribution}\\Google Research
  \And
  Meenatchi Sundaram Muthu Selva Annamalai\footnotemark[1]\\University College London
  \And
  Jamie Hayes\\Google DeepMind
  \And Borja Balle\thanks{Work done while at Google DeepMind}\\Optiak
  \And Adam Smith\\Boston University
}
\begin{document}

\maketitle

\begin{abstract}
Differential Privacy (DP) bounds the privacy leakage of a mechanism against worst-case membership inference, but the precise tradeoff between complex adversarial models and DP protections remains poorly understood. In this paper, we present a unified framework that generalizes the patchwork of existing bounds across membership inference, attribute inference, and data reconstruction attacks. 
Crucially, our framework is the first to evaluate attacks that target multiple individuals simultaneously and measure success beyond exact matches under a single cohesive bound. Our bounds capture this broad family of previously unexplored attack settings by relying solely on the privacy parameters and the adversary's baseline success rate (i.e. its prior without access to the mechanism's output). To illustrate this, we compare our high-probability guarantees to empirical attacks in two novel settings: extracting multiple non-uniform secrets (passwords and PII) from DP-finetuned language models, and reconstructing tabular data from noisy marginals. Ultimately, this framework provides a rigorous theoretical foundation to investigate the risk landscape of DP algorithms in new adversarial settings.

\end{abstract}

\section{Introduction}
Differential privacy (DP)~\cite{dwork2006calibrating} has emerged as a standard for designing mechanisms that process sensitive data in a variety of deployment contexts \cite{Apple17,GoogleDP,Google-FL-blog, Abowd2018TheUC,HodCanetti2025}. 
Unlike ad-hoc methods, DP provably bounds the amount of information that a mechanism reveals about any individual using the privacy parameter(s), with larger parameters typically giving weaker privacy guarantees.

A central challenge in DP deployments is how to interpret the privacy protections of the chosen privacy parameters in context. The DP guarantee naturally bounds the success of a worst-case membership inference (MI) adversary, who attempts to distinguish between the outputs of the mechanism on two inputs that differ on a single user's data~\cite{kairouz2015composition}. While there are numerous good reasons to use this threat model in \emph{defining} DP%
\footnote{To name a few: worst-case MI \emph{is} the correct threat model for some mechanisms like randomized response, the guarantee is dataset-agnostic, it implies protections against less-informed adversaries, it allows us to compare different mechanisms, etc.},
it provides little insight into \emph{accurately quantifying} the risk under more challenging adversarial modeling assumptions. Consider a DP deployment that uses $\eps = 1$. The DP bound tells us that worst-case MI succeeds with probability at most $0.73$\footnote{In general the probability of success against worst-case MIA for any $\eps$-DP mechanism is at most $\frac{e^\eps}{e^\eps + 1}$.}. Now suppose the primary concern for this deployment isn't membership inference but rather the total number of users' data that could be partially reconstructed, possibly using prior information about the user population. Until now, practitioners have not had any means to \emph{translate} the $0.73$ probability of worst-case MI into context-dependent privacy risks. This problem is magnified for privacy parameters that provide only ``vacuous guarantees'' against worst-case MI but which may empirically protect against the risks that are relevant to the deployment.

Practitioners in this situation often rely on empirical privacy attacks to gauge risk. However, this presents several challenges. First, empirical attacks merely establish lower bounds on privacy leakage; future, more sophisticated attacks could easily invalidate current risk assessments. Furthermore, state-of-the-art attacks are often highly tuned to specific mechanisms, hyperparameter settings, or data distributions, making them difficult to generalize. For example, even within DP-SGD, epsilons ranging from $10^2$ to $10^9$ have thwarted state-of-the-art reconstruction attacks \citep{carlini2019secret, ponomareva2022training, balle2022reconstructing}. Unless a practitioner's exact deployment and threat model mirrors a published experiment, empirical findings offer modest actionable guidance. Furthermore, as \citet{cummings2024attaxonomy} argue, many attack settings that are relevant to practitioners have not been studied yet. In the end, the attacks' findings are boiled down to general rules of thumb like the ``undocumented but still widely used goal for DP ML models of achieving an $\eps \leq 10$''~\cite{ponomareva2023dp}.

\begin{table*}[t]
    \begin{tabular}{l@{}c@{}cc@{}c@{}cc@{}cc@{}c}
        \toprule
        \multirow{2}{*}{\bf Reference} & \multicolumn{2}{c}{\bf Prior Knowledge} & \multicolumn{3}{c}{\bf Attack Goal} & \multicolumn{2}{c}{\bf No. of Targets} & \multicolumn{2}{c}{\bf Success Metric} \\
        \cmidrule(l){2-3}        \cmidrule(l){4-6}         \cmidrule(l){7-8}         \cmidrule(l){9-10}
        & {\bf\small Uniform~} & {\bf\small Non-Uniform} & {\bf\small MI~} & {\bf\small AI~} & {\bf\small Recon} & {\bf\small Single~} & {\bf\small Multiple} & {\bf\small Simple~} & {\bf\small Complex} \\
        \midrule
        \citet{balle2022reconstructing} & \cmark & \cmark & \cmark & \cmark & \cmark & \cmark & \xmark & \cmark & \cmark \\
        \citet{hayes2024bounding} & \cmark & \cmark & \cmark & \cmark & \cmark & \cmark & \xmark & \cmark & \cmark \\
        \citet{cummings2024attaxonomy} & \cmark & \cmark & \cmark & \cmark & \cmark & \cmark & \xmark & \cmark & \cmark \\
        \citet{kulynych2025unifying} & \cmark & \cmark & \cmark & \cmark & \cmark & \cmark & \xmark & \cmark & \cmark \\
        \citet{stock2022defending} & \cmark & \cmark & \cmark & \cmark & \cmark & \cmark & \xmark & \cmark & \xmark \\
        \citet{steinke2024privacy} & \cmark & \cmark & \cmark & \xmark & \xmark & \cmark & \cmark & \cmark & \xmark \\
        \citet{mahloujifar2024auditing} & \cmark & \xmark & \cmark & \xmark & \xmark & \cmark & \cmark & \cmark & \xmark \\
        \citet{cherubin2024closed} & \cmark & \xmark & \cmark & \cmark & \xmark & \cmark & \cmark & \cmark & \xmark \\
        \midrule
        {\bf Ours} & \cmark & \cmark & \cmark & \cmark & \cmark & \cmark & \cmark & \cmark & \cmark \\
        \bottomrule
    \end{tabular}
    \small
    \centering
    \vspace*{2mm}
    \caption{Comparing the applicability of our framework with prior work in various settings (see Section~\ref{sec:threat_model}).} %
    \label{tab:compare_prior_work}
\end{table*} \subsection{Our Contributions}
In this work, we formalize precise, adversary-aware DP upper bounds that quantify privacy risks across previously uncharacterized threat models. While prior bounds have largely treated membership inference, attribute inference, and exact reconstruction as isolated scenarios, our framework mathematically bridges these paradigms. Specifically, we provide the machinery to bound attacks along three simultaneous dimensions: the scale of the attack (targeting multiple individuals), the skew of the adversary’s prior knowledge (handling non-uniform distributions), and the strictness of the success criteria (accommodating approximate, rather than purely exact, data recovery). At a technical level, our bounds achieve this flexibility by depending exclusively on the privacy parameters and the adversary’s baseline probability of making a successful guess without access to the mechanism's output.
More specifically:

\paragraph{Unifying Previous Work.} Our bounds unify the patchwork of bounds focused on various settings of membership inference (MI) \cite{steinke2024privacy, mahloujifar2024auditing, cherubin2024closed}, attribute inference (AI)~\cite{cherubin2024closed}, and reconstruction \cite{stock2022defending, balle2022reconstructing, cummings2024attaxonomy, hayes2024bounding, cohen2025data, kulynych2025unifying}. Our bounds strictly generalize each of these works. A detailed discussion of related work and how our bounds bridge existing frameworks is provided in  Appendix~\ref{sec:bridging_attack}.

\paragraph{Covering New Settings.} 
The new bounds' generality allows them to be applied in previously unexplored settings. \Cref{tab:compare_prior_work} summarizes several dimensions in which one can compare existing bounds to ours; the table's terminology is detailed in \Cref{sec:threat_model}.

\paragraph{Empirical Case Studies.} To illustrate the versatility of our bounds and compare them to the success of concrete attacks, we study two specific settings that have been unexplored in prior work deriving upper bounds: extracting non-uniform secrets (passwords and PII) from DP-finetuned language models (Section~\ref{sec:novel_bounds}) and reconstructing tabular data from noisy marginals (Appendix~\ref{sec:exps}).
\section{Background}
In this section, we cover relevant background topics of DP and threat modeling terms. Informally, DP bounds the influence that any single user has on the output distribution of a mechanism. 
Throughout this paper, we assume the ``replace-one'' adjacency where exactly one record is replaced with another record.

\begin{definition}[Approximate DP~\cite{dwork2006calibrating}]
    \label{def:approxdp}
    A randomized mechanism $\mech : \mathcal{D} \rightarrow \mathcal{R}$ satisfies $(\eps, \delta)$-DP if, for any two adjacent datasets $\ds, \ds' \in \mathcal{D}$ and $S \subseteq \mathcal{R}$, it holds:
    \begin{equation*}
        \Pr[\mech(\ds) \in S]  \leq \mathrm{e}^\eps \Pr[\mech(\ds') \in S] + \delta
    \end{equation*}
\end{definition}

There are many other variants of DP, e.g., the pure DP variant where $\delta = 0$ and the more recent Rényi DP ($(\alpha, \gamma)$-RDP)~\cite{mironov2017renyi}, $f$-DP~\cite{dong2019gaussian}, and $\mu$-GDP~\cite{dong2019gaussian} variants, which characterize more complex relationships between the output distributions.

\subsection{Threat Modeling}
\label{sec:threat_model}
Several prior works have focused on systematizing and proposing a taxonomy of privacy attacks~\cite{rigaki2023survey,salem2023sok,cummings2024attaxonomy}. To categorize the privacy risks evaluated in \Cref{tab:compare_prior_work}, we adopt the taxonomy of \citet{cummings2024attaxonomy}. \textbf{Attack Goals} range from membership inference (MI) \citep{shokri2017membership}, to attribute inference (AI) \citep{ganju2018property}, and full data reconstruction (Recon) \citep{dinur2003revealing}. \textbf{Prior Knowledge} dictates whether the adversary assumes uniform probabilities \citep{mahloujifar2024auditing, steinke2024privacy} or exploits realistic, non-uniform data distributions \citep{dick2023confidence, gadotti2022pool}. \textbf{Number of Targets} distinguishes between isolating a single target versus simultaneously attacking multiple records \cite{cohen2022attacks,dick2023confidence,cohen2018linear,gadotti2019signal,annamalai2024linear,cherubin2024closed}. \textbf{Success Metrics} denote whether the attacker must achieve an exact match (Simple) or if they succeed via approximate reconstruction distances (Complex) \citep{balle2022reconstructing, carlini2021extracting, hayes2024bounding}.

\section{Our Privacy Leakage Framework}
\label{sec:our_framework}

In this section, we bound the success of an adversary in a \emph{generic attack game}. We begin by specifying the game and the setting it represents. Then in Theorem~\ref{thm:eps_bound}, we state our bound for pure DP mechanisms in this generic setting. After discussing multiple interpretations of the bound, we describe how to encode specific attack settings in the parameters of our framework. Next, we extend our bounds to approximate DP mechanisms and discuss the nuances that arise in this challenging setting. Lastly, we compare our bounds with the current state-of-the-art bounds to show how our bounds are tighter and more flexible than prior work.

\subsection{Generic Attack Game}

The \emph{generic attack game} in Algorithm~\ref{alg:dist_recon}  describes the family of attacks that our bounds cover. The game works as follows. First, $n$ targets $X$ are sampled from a given product distribution $\dd$. The targets $X$ are fed to the mechanism \mech, which produces an output $\mechout$. The adversary \adv uses the mechanism output $\mechout$, the data generating distribution $\dd$, and the mechanism specification \mech to produce $k$ attack attempts $\targets$. Lastly, the quality of the attack attempt \guesses is measured against the true dataset $X$ according to some success metric $\metric$.

\begin{algorithm}
        \caption{Generic Attack Game}
        \label{alg:dist_recon}
        \hspace*{\algorithmicindent} \textbf{Input:} Mechanism \mech, Adversary \adv, Data distribution $\dd$, Success metric \metric \\
        \hspace*{\algorithmicindent} \textbf{Output:} Attack success measure
        \begin{algorithmic}[1] %
                \State Sample targets $(X_1, \ldots, X_n) \sim \dd_1 \otimes \cdots \otimes \dd_n$
                \State Compute mechanism output $\mechout \gets \mech(X_1, \ldots, X_n)$
                \State Attack attempt $(\targets_1, \ldots, \targets_k) = \adv(\mechout, \dd, \mech)$
                \State \Return Attack success $\metric(X, \targets)$ 
        \end{algorithmic}
    \end{algorithm}

We call $\dd$ the adversary's \emph{prior distribution} over the records, and the prior $\dd_i$ for person $i \in [n]$ encodes the knowledge that the adversary has for that target person's data before seeing the mechanism output. More generally, our framework can be applied to whatever \emph{privacy unit} the mechanism \mech is designed to protect, such as, individual records, users, or households. Crucially, $X$ does not necessarily represent full data records. Instead, $\dd$ is defined over the unknown domain of the attack, whether that is membership bits, specific missing attributes, or full records. Our bounds handle arbitrary priors $\dd_1, \ldots, \dd_n$ for each target, as long as they are \emph{independent}. This independence is a fundamental modeling requirement for DP itself. If data records are correlated, the standard replace-one DP guarantee inherently breaks down, as the impact of one individual's information can be larger than what the mechanism was designed to protect~\cite{humphries2023investigatingmembershipinferenceattacks}.

The \emph{attack success metric} $\metric$ encodes the goal of the adversary. The metric $\metric: \cX \times \cZ \to \{0, \ldots, n\}$  takes a fixed list of targets $X$ of size $n$ and a list of \emph{guesses} or \emph{attack attempts} \guesses of size $k$, and reports the number of targets in $X$ that were successfully ``attacked'' by the adversary. Our bounds handle arbitrary success metrics, with the requirement that \metric is \emph{decomposable}, which we define next.

\begin{definition}[Decomposable metric]
Let $\cX$ and $\cZ$ be the data and attack domains, respectively. A metric $\metric: \cX \times \cZ \to \{0, \ldots, n\}$ is \emph{decomposable} into $\ell_1 \cdots \ell_n$ if 
\begin{equation*}
    \metric(\ds, \guesses) = \sum_{i \in [n]} \ell_i(\ds_i, \guesses).
\end{equation*}
 \end{definition}

In Section~\ref{sec:encoding_attacks} we discuss how to encode common attack goals---including MI, AI, and reconstruction---in the success metric. Although we do not explicitly state it, our generic attack game allows any number of fixed ``non-target'' records $\dsfixed$ to be input to the mechanism in addition to the sampled target records. We assume that the adversary has full knowledge of these records---and can even select them. As a result, the attack success is not measured with respect to the non-target records. We omit the fixed records in our attack game and bounds, but one can imagine the mechanism $\mech$ takes these as a hard-coded input.

\subsection{Bounding Pure DP Mechanisms}

We now present our bound for $\eps$-DP mechanisms. For a private dataset that is drawn from the prior distribution, and given any fixed mechanism output $\mechout$, we present a high probability bound on the attack success as a function of: (1) the mechanism's DP parameters, and (2) the adversary's success probability if they had only used prior information. Formally:

\begin{restatable}{theorem}{thmPureDP}\label{thm:eps_bound}
    Let \mech satisfy $(\eps, 0)$-DP. Let dataset $X \sim \dd = \dd_1 \otimes \ldots \otimes \dd_n$ be drawn from a product distribution. Let $\adv$ be an adversary and let \metric be decomposable into $\ell_1 \cdots \ell_n$. Then, for all $v\in \R$ and for all mechanism outputs $\mechout \subseteq Supp(\mech)$:
    \begin{align*}
        \Pr_{X\sim \dd}\left[\metric(X, \adv(\mechout)) \geq v | \mech(X) = \mechout\right] 
        \leq \Pr_{S_i\sim \bern(\beta_i(\targets, \eps))}\left[\sum_{i\in[n]}S_i \geq v\right],
    \end{align*}
    where $\targets = \adv(\mechout)$ and $\beta_i(\targets, \eps) =  \frac{e^\eps}{e^\eps - 1 + \frac{1}{\Pr_{X\sim \dd_i}[\ell_i(X,\targets)=1]}}$
\end{restatable}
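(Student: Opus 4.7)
The plan is to adapt the Steinke--Nasr--Jagielski argument, extending it to arbitrary decomposable success metrics and independent but possibly non-uniform priors. Fix any $\mechout \in \mathrm{Supp}(\mech)$, so that the guesses $\targets = \adv(\mechout)$, the per-target priors $p_i := \Pr_{X_i \sim \dd_i}[\ell_i(X_i, \targets) = 1]$, and the constants $\beta_i(\targets, \eps)$ are all deterministic. The goal is to show that, under the conditional law of $X$ given $\mech(X) = \mechout$, the binary vector $(\ell_i(X_i, \targets))_{i \in [n]}$ is stochastically dominated by a vector of independent $\bern(\beta_i)$; since $\metric(X, \targets) = \sum_i \ell_i(X_i, \targets)$ by decomposability, summing yields the stated tail bound.

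The first and central step bounds, for every $i$ and every $x_{-i}$, the conditional posterior
\[
    \Pr\!\big[\ell_i(X_i, \targets) = 1 \,\big|\, X_{-i} = x_{-i},\ \mech(X) = \mechout\big].
\]
Letting $q(x) := \Pr[\mech(x_{-i}, x) = \mechout]$ and using the product structure of the prior, Bayes' rule rewrites this posterior as $\E_{X_i \sim \dd_i}[\ind[\ell_i(X_i, \targets) = 1]\, q(X_i)] / \E_{X_i \sim \dd_i}[q(X_i)]$. Pure DP, applied by swapping only the $i$-th coordinate, gives the pointwise ratio bound $q(x)/q(x') \leq e^\eps$. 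Splitting the denominator according to $\{\ell_i = 1\}$ versus $\{\ell_i = 0\}$ and using the ratio bound on each piece collapses the quotient into $e^\eps p_i / (e^\eps p_i + (1 - p_i))$, which by the algebraic identity appearing in the theorem is exactly $\beta_i(\targets, \eps)$. Crucially, this bound is \emph{uniform in $x_{-i}$}.

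The second step promotes this per-coordinate bound into stochastic dominance of the sum. By the tower property, the uniform bound implies $\Pr[\ell_i = 1 \mid \ell_1, \ldots, \ell_{i-1},\, \mech(X) = \mechout] \leq \beta_i$ for every prefix realization. A standard sequential coupling then constructs independent $S_i \sim \bern(\beta_i)$ jointly with the $\ell_i$'s in such a way that $\ell_i \leq S_i$ almost surely, so $\sum_i \ell_i \leq \sum_i S_i$ a.s., and the tail inequality of the theorem follows.

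The main technical obstacle I anticipate is the Bayes--DP calculation in the first step: the DP ratio must be applied in the full $n$-record product space but only to a swap of the $i$-th coordinate, so the independence of the priors $\dd_1 \otimes \cdots \otimes \dd_n$ is doing essential work in letting $q$ be defined by integrating out $X_{-i}$ consistently. The generalization to genuine $(\eps, \delta)$-DP with $\delta > 0$ is more delicate, since the pointwise ratio bound can fail on a set of outputs of total mass up to $\delta$; either the theorem should be read for pure DP ($\delta = 0$), or one must restrict to a high-probability ``good event'' over $\mechout$ that absorbs the $\delta$ slack, which I expect the paper handles in the subsequent approximate-DP extension.
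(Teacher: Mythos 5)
Your proposal is correct and follows essentially the same route as the paper: a Bayes-rule computation bounds the per-coordinate posterior $\Pr[\ell_i=1\mid\cdot,\mech(X)=\mechout]$ by $\beta_i$, and a sequential coupling (the paper invokes Lemma~4.9 of Steinke--Nasr--Jagielski, which is exactly your ``standard sequential coupling'') promotes the per-coordinate bound to stochastic dominance of the sum. The only stylistic difference is that you condition on all of $X_{-i}$ rather than just the prefix $X_{<i}$, which makes the DP ratio step slightly cleaner since no implicit averaging over $X_{>i}$ is needed; your remark that the argument genuinely requires $\delta=0$ is also right --- the paper's own proof opens by fixing ``a $(\eps,0)$-DP mechanism,'' and the $\delta>0$ case is deferred to the separate approximate-DP theorems.
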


To put this bound into context, consider a perfectly private $(0,0)$-DP mechanism. The bound tells us that, for every fixed mechanism output $a$ and attacker output $\targets$, the number of successfully attacked targets is distributed like a sum of Bernoullis with a flipping probabilities $\Pr_{X\sim \dd_i}[\ell_i(X, \targets) = 1]$ for $i=1, \ldots, n$. As expected, this is exactly the adversary's prior success probability for $\targets$ on a fresh sample of targets from the prior.

Now consider $(\eps, 0)$-DP mechanisms for $\eps > 0$. The bound tells us exactly how far the posterior success rate on each target can be from that of the perfectly private mechanism. Rather than a posterior of $\Pr_{X\sim \dd_i}[\ell_i(X, \targets) = 1]$, which would indicate the adversary gains no information from the mechanism output, the posterior essentially has an $e^\eps$ multiplicative dependence.\footnote{This $e^\eps$ is the dependence used in Reconstruction Robustness, though the actual $\beta$ in Theorem~\ref{thm:eps_bound} is significantly tighter for large priors and large epsilons.} For decomposable success metrics, we can sum the posterior probabilities to get the resulting bound over the total number of successfully attacked targets. 

The dependence on the prior attack success probability captures the phenomenon that for natural (non-uniform) data distributions, some targets are easier to ``attack'' using only prior knowledge, and some success measures are more permissive than others (for example approximate rather than exact reconstruction measures). Furthermore, it captures the tradeoff between the \emph{inherent difficulty} of the attack and the protections guaranteed by DP--something which previously was only stated in vague terms and not quantified clearly.\footnote{For example, when discussing why reconstruction attacks fail against DP mechanisms with large $\eps$s \cite{ponomareva2023dp} states ``There is a natural intuition for why larger $\eps$s provide effective protection in these works--the attacks generally consider an adversary attempting to answer a high-dimensional question (e.g., reconstructing a full training example) with only limited information about the dataset (e.g., distributional).`` We can now formalize this intuition.} In Proposition~\ref{prop:pure_bnd_tight} we show Theorem~\ref{thm:eps_bound} is tight for distributions that are sufficiently close to uniform. The analysis of \citet{keinan2025well} on which types of DP mechanisms yield tight auditing bounds in this ``one run'' framework, could likely be extended to our more general bounds, although the focus of our work is not auditing.

 There are a number of ways to view and use this bound. As in the \emph{privacy auditing} literature \cite{ding2018detecting, jagielski2020auditing, steinke2024privacy, annamalai2025hitchhiker}, one may interpret the bound as a hypothesis test to detect violations of the DP guarantee for a particular mechanism that indicate it was not implemented correctly. If a real attack is more successful than $v$ with high probability then one can confidently refute the hypothesis that the mechanism is differentially private with the claimed privacy parameters. In this work, we are more interested in characterizing the privacy leakage of faithfully implemented DP mechanisms. To that end, we measure $v$ for different probabilities and compare that to the success of empirical attacks.

\subsection{Encoding Attacks}
\label{sec:encoding_attacks}
Our bound is flexible enough to encode most attack settings from the taxonomy developed by Cummings et al.~\cite{cummings2024attaxonomy}.

\paragraph{Attack targets and goals.}
The target distribution $\dd$ represents the adversary's prior over the unknown part of each data record. We can encode MI by setting $X_i \sim \bern(0.5)$ to denote the membership bit of record $i \in [n]$ (as in \citep{steinke2024privacy}). The fixed data records in the MI game are encoded in $\mech$ and are thus known to the adversary.
For full reconstruction, $\dd_i$ becomes the data-generating distribution for person $i$, while AI interpolates these by conditioning $\dd_i$ on known attributes.

\paragraph{Success metrics.}
Our attack success metric captures exact match $\ell_i(X_i, \guesses) = \I\{X_i = \guesses_i\}$ or complex individual-level approximate reconstruction by thresholding distance  $\ell_i(\ds_i, \guesses) = \I\{(\min_j ||\ds_i - \guesses_j||_2) \leq \tau\}$ as done in prior work~\cite{balle2022reconstructing,hayes2024bounding} .

\paragraph{Prior knowledge.}
In Theorem~\ref{thm:eps_bound}, the target generation distribution $\dd_i$ for each person $i\in[n]$ can be set \emph{arbitrarily}, as long as it is independent from others' target distributions. The bound itself only depends on the prior probability of success for a particular attack attempt \guesses: $\Pr_{X\sim \dd_i}[\ell_i(X,\targets)=1]$, that is, the probability that a freshly sampled record $X \sim \dd_i$ would be ``close to'' the attack attempt \guesses.  

\subsection{Bounding $f$-DP}

Generalizing our pure-DP to approximate notions of DP comes with a number of challenges. Approximate DP mechanisms suffer from potentially unbounded privacy loss, so incorporating the mechanism output requires more care. We apply the proof techniques from \citet{steinke2024privacy} and extend their bounds in two key ways: by allowing for general success metrics and arbitrary data distributions. We state and prove approx-DP bounds in Appendix~\ref{sec:approx_df_proofs}, which we further generalize to $f$-DP in Appendix~\Cref{sec:f_dp_proofs} by invoking the primal-dual perspective of $f$-DP~\cite{dong2019gaussian}. 

We begin by stating an $f$-DP generalization of our pure DP bound which incurs an additive factor of $n \cdot \delta_f(\eps)$ in expectation,  where $\delta_f(\eps) = 1 + f^*(-e^\eps)$ and $f^*$ is the convex conjugate of $f$. Here, we trade the simplicity of the bound with the computational overhead of having to estimate the probability over multiple runs of the mechanism.

\begin{restatable}{theorem}{corfBndBasic} \label{cor:f_basic} 
    Let 
    \mech satisfy $f$-DP. Let dataset $X \sim \dd = \dd_1 \otimes \ldots \otimes \dd_n$ be drawn from a product distribution. Let $\adv$ be an attacker and let \metric be decomposable into $\ell_1 \cdots \ell_n$. For all $\eps > 0$, $v\in \R$:
    \begin{align*}
        \Pr_{\substack{X\sim \dd \\ \mechout \sim \mech(X)}}[\metric(X, \adv(\mechout)) \geq v] \leq \Pr_{\substack{X \sim \dd \\ \mechout \sim \mech(X) \\ S_i\sim \bern( \beta_i(\adv(\mechout), \eps))}}\left[\sum_{i \in [n]} S_i \geq v\right] + n \cdot \delta_f(\eps),
    \end{align*}
    where $\beta_i(\adv(\mechout), \eps)= \beta_i(\adv(\mechout), \eps) =  \frac{e^\eps}{e^\eps - 1 + \frac{1}{\Pr_{X\sim \dd_i}[\ell_i(X,\adv(\mechout))=1]}}$. 
\end{restatable}
The approx-DP analog of this bound is \Cref{cor:approx_simple}. Note that since $\delta_f$ is typically much smaller than $1/n$, the additive term brought on by $f$-DP is fairly small, however the flipping probabilities $\beta_1, \ldots, \beta_n$ are now random variables, which must be estimated from multiple runs of the mechanism on fresh data samples. In \Cref{sec:f_dp_proofs} \Cref{thm:eps_bound_f_comp}, we provide a version of this bound which uses the worst-case prior and avoids a dependence on the mechanism output on the RHS. 
\subsection{Numerical Comparisons With Prior Work}
\label{sec:num_exps}

\begin{figure}[h]
    \centering
    \includegraphics[width=0.7\linewidth]{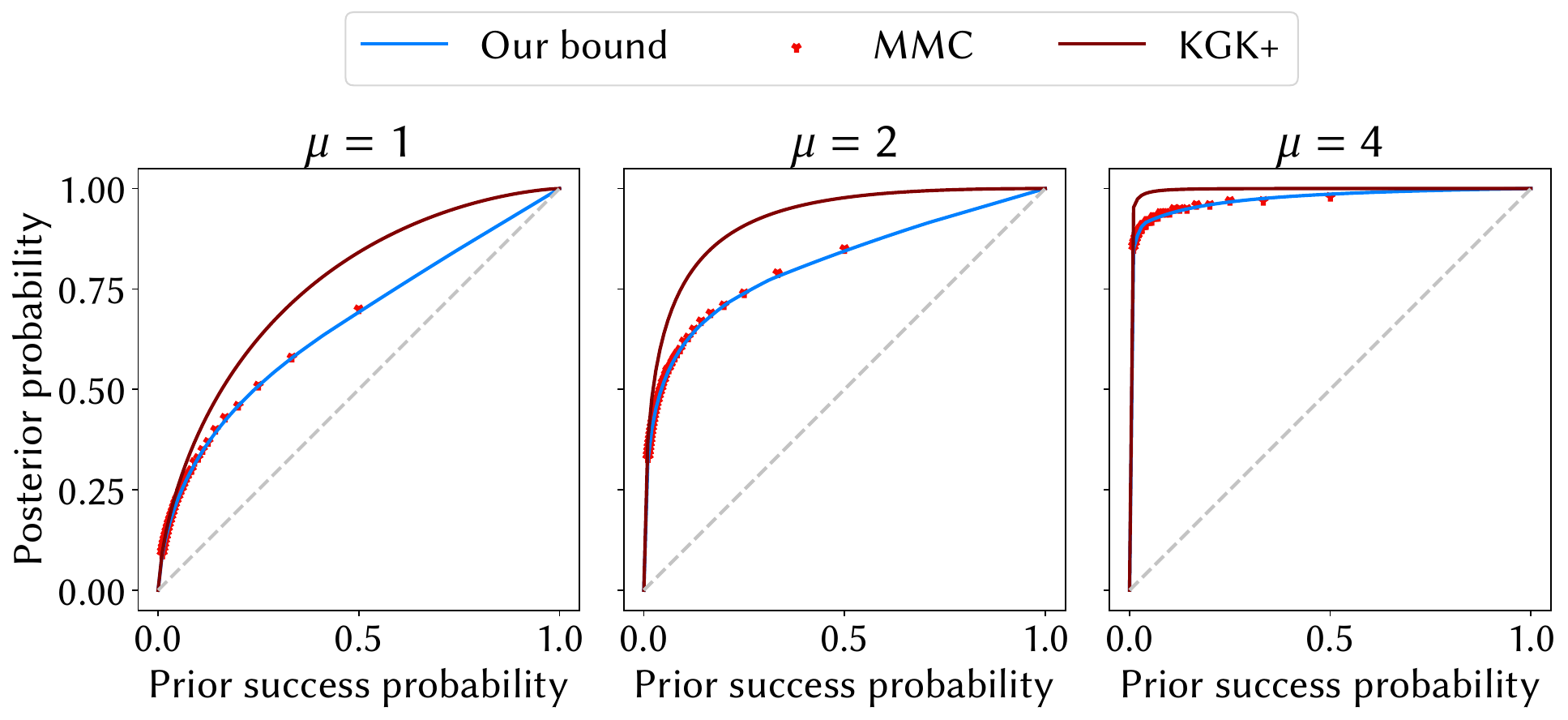}
    \label{fig:compare_bounds_gdp}
    \caption{Compare our bounds with prior state-of-the-art bounds (KGK+~\cite{kulynych2025unifying} and MMC~\cite{mahloujifar2024auditing}) when attacking a single record with varying prior at different privacy levels for $\mu$-GDP. %
    }
    \label{fig:compare_bounds}
\end{figure}

We run numerical experiments to showcase the versatility of our bound and compare it with the state-of-the-art bounds. In order to compare the bounds in a common attack setting, we consider the single target record ($n=1$) and single attempt ($k=1$) setting. To that end, we report our upper bounds on the attack success probability from Theorem~\ref{thm:eps_bound_f_comp} for varying prior probabilities in Figure~\ref{fig:compare_bounds} and compare this to the current state-of-the-art bounds KGK+~\cite{kulynych2025unifying} and MMC~\cite{mahloujifar2024auditing}.

For all privacy levels, our work is substantially tighter than the current state-of-the-art bounds that supports non-uniform priors, KGK+.
For uniform priors, our bounds closely align with the MMC bound.
Unlike MMC, our bounds are continuous and can be extended to any given prior success probability between 0 and 1.
\section{Bounding Data Extraction from LLMs}
\label{sec:novel_bounds}

We now leverage the versatility of our bound to present novel reconstruction bounds for attack settings involving non-uniform priors and multiple target samples.
Specifically, we focus on data extraction attacks in the context of LLMs~\cite{carlini2019secret,carlini2021extracting,lukas2023analyzing,nasr2025scalable}.
Although prior work has shown some success in effectively training LLMs on private data with DP~\citep{li2022large,yu2022differentially,sinha2025vaultgemma}, sensitive data can still be extracted from these models~\cite{lukas2023analyzing}.
The main reason for this is that data can often be inferred from the context present in the sample~\cite{lukas2023analyzing}, similar data appears elsewhere in the dataset~\cite{liu2025language}, or the data itself has low entropy (e.g., 123456789 is a very common password).
In other words, the \emph{prior probability} of extracting data from language models is often non-uniform.
Previous bounds~\cite{hayes2024bounding,mahloujifar2024auditing} either supported single target canaries or uniform priors, but not both simultaneously.

To that end, we focus on three types of data distributions---\textbf{uniform random canary}, \textbf{numerical password}, and \textbf{PII}---and present methods to estimate their prior distributions.
Subsequently, use the Secret Sharer~\citep{carlini2019secret} setup and DP-finetune GPT-2 on canaries drawn from these distributions, try to extract the canaries, and compare  our theoretical bounds to concrete attacks. All our experiments are run on a single A100 GPU. Our experiments on tabular data can be found in Appendix~\ref{sec:exps}. 

\subsection{Estimating Priors}
\paragraph{Uniform random canary.}
Uniformly chosen canaries are most commonly used when evaluating data extraction attacks, as the explicit prior distribution makes it easy to determine when the attack is non-trivial.
Specifically, we derive bounds for the 9-digit uniform random canary from prior work~\cite{carlini2019secret} and set the prior probability to $10^{-9}$ for all possible canaries.

\paragraph{Numerical password.}
Data extraction attacks have also been used to extract sensitive data that appears \emph{naturally} in the dataset such as credit card and social security numbers~\cite{carlini2019secret}.
Along this line, we consider 9-digit numerical passwords as they follow a similar format to uniformly chosen canaries.
Following prior work in similar domains~\cite{gadotti2022pool,erlingsson2014rappor,wang2017locally}, we model the prior distribution with a Zipf's law distribution under the assumption that a few passwords are exceedingly popular amongst users, e.g., the password `123456789' is known to be used roughly 43M times in the wild~\cite{hunt2025pwned}.

\paragraph{PII.}
Lastly, we consider the extraction of Personally Identifiable Information (PII)~\cite{lukas2023analyzing,carlini2021extracting,nasr2025scalable}.
However, unlike uniformly chosen canaries and numerical passwords, PIIs do not have a fixed format or a well-known distribution, and can possibly depend on the surrounding text~\cite{lukas2023analyzing}.

Therefore, we use a pre-trained LLM to estimate the prior over PIIs.
Following recent work~\cite{lukas2023analyzing} we focus on reconstructing names in the Enron emails dataset~\cite{klimt2004introducing} given the surrounding text.
We first sample 500 sequences from the Enron emails dataset and extract a set of candidate PIIs that occur in these sequences.
Next, we calculate the perplexity of the pre-trained GPT-2 model~\cite{radford2019language} for each sequence and candidate PII conditioned on the surrounding text.
Lastly, we normalize the perplexities over the set of candidate PIIs to compute the prior distribution of PIIs given the surrounding text.

\begin{figure}[t]
\centering
    \captionsetup[subfigure]{justification=centering}
    \subfloat[$\textrm{Adv}$ bounds for different prior distributions and privacy levels $\eps$]{
        \includegraphics[width=0.3\linewidth]{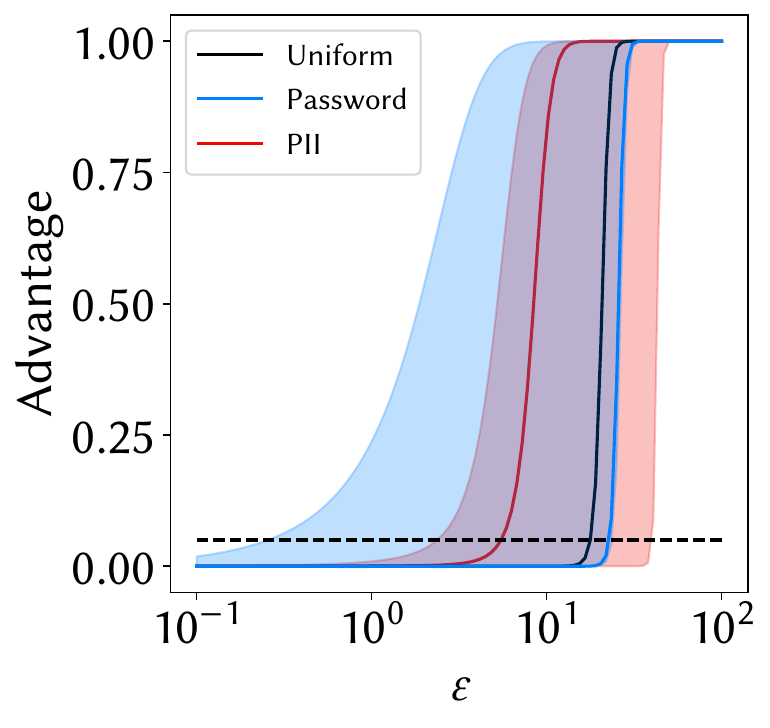}
        \label{fig:novel_bounds}
    }
    \hfil
    \subfloat[$\eps_{\textrm{protect}}$ that bounds $\textrm{Adv} \leq 0.05$ for varying prior probabilities]{
        \includegraphics[width=0.3\linewidth]{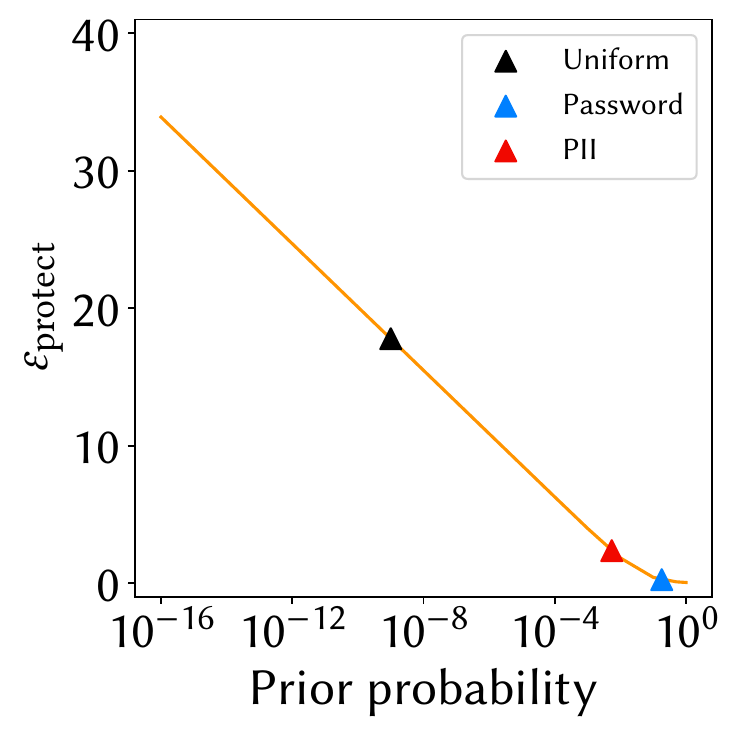}
        \label{fig:eps_protect}
    }
    \caption{Bounding advantage of data extraction for different prior distributions at fixed $\delta = 10^{-5}$. NB: Dotted line represents $\textrm{Adv} = 0.05$ and markers correspond to canary with highest prior probability for each distribution.}
\end{figure}

\subsection{Reconstructing Data from an LLM}
To contextualize our bounds with respect to a concrete attack, we fine-tune a GPT-2 model with DP-SGD~\cite{abadi2016deep} on a dataset consisting of canaries drawn from different distributions and task the adversary with reconstructing the canaries. We emphasize that the aim of this section is not to perform a novel and realistic privacy attack, but rather study how well our bounds model concrete attacks against real mechanisms and therefore take a DP auditing approach~\cite{panda2025privacyauditinglargelanguage}.

Following prior work~\cite{panda2025privacyauditinglargelanguage}, we construct canaries using new tokens, which has been shown to be essential in achieving meaningful attack success against DP fine-tuned LLMs on uniform canaries.
Specifically, each canary consists of two newly created tokens: the prefix, which is fixed to the ID of the canary (\texttt{<1>}, \texttt{<2>}, etc.) and the secret, which is sampled from 10 newly created tokens for each canary.
For simplicity, we scale down the prior distributions from their original domains to being over 10 possible secret tokens.
To reconstruct the canary, we instantiate the ``Model Attack'' on the finetuned model from prior work~\cite{hayes2024bounding}.
We assume that the adversary has access to only the final trained model, the prior distribution over the secret tokens, and the prefix token.
The adversary then reconstructs the secret token by choosing the token with the highest probability assigned by the model conditioned on the prefix, combined with the prior. 

\paragraph{Idealized Gaussian Attack.} While prior "one-run" auditing frameworks have evaluated empirical membership inference attacks against real models~\cite{mahloujifar2024auditing,steinke2024privacy}, extending these audits to data extraction with non-uniform priors introduces a new confounding variable: the practical data extraction attack itself may be highly suboptimal. To isolate the tightness of our theoretical bound from the empirical limitations of the black-box model attack, we also evaluate an idealized Gaussian mechanism. We sample one-hot canary vectors from our priors, add Gaussian noise, and execute a Bayes optimal attack. At a high level, this optimal adversary evaluates every candidate canary by combining the observed noisy vector with the known prior distribution, and guesses the canary with the highest posterior probability (Appendix~\ref{app:idealized_gaussian}). Because the Gaussian mechanism forms the basis of DP-SGD, and its Bayes optimal adversary is mathematically tractable, it provides a ceiling for attack success. Comparing our bound directly against this Bayes optimal adversary tells us exactly how much slack exists in our bound, independent of the empirical limitations of the Model Attack.

\subsection{Results}
In Figure~\ref{fig:novel_bounds}, we plot our bounds on the generalized advantage ($\textrm{Adv}$)~\cite{cherubin2017bayes} when reconstructing different secrets from the three prior distributions considered.
Since the advantage is a measure of an attack's \emph{significance} and not just its success, it allows us to compare the bounds more effectively between canaries with different priors. Using Theorem~\ref{thm:eps_bound_approx_comp}, we compute posterior bounds for canaries with minimum\footnote{When estimating the prior distribution for PIIs, we note that the minimum probability can sometimes be prohibitively small ($\approx 0$) and hence we plot the $10^\textrm{th}$ percentile instead.}, maximum, and median prior probabilities. We convert these to generalized advantage bounds using the following equation~\cite{cherubin2017bayes}: $\textrm{Adv} = (\textrm{Posterior} - \textrm{Prior})/(1 - \textrm{Prior})$.

Additionally, in Figure~\ref{fig:eps_protect} we vary the prior probability and plot the corresponding $\eps_{\textrm{protect}}$ that would result in an $\textrm{Adv} \leq 0.05$ at fixed $\delta = 10^{-5}$.
Informally, Figure~\ref{fig:eps_protect} plots the privacy level required to protect data with a given prior probability from leaking.\footnote{Here, $0.05$ is chosen arbitrarily, but depending on the context the advantage threshold considered can be made more stringent or loose.}

\paragraph{Non-Uniform Priors Drastically Increase Risk.}

While it is intuitive that common secrets are easier to guess, our framework explicitly quantifies how non-uniform priors result in disparate privacy leakage across individuals. Unlike uniform canaries, which share identical advantage bounds, canaries with higher prior probabilities admit attacks with substantially higher adversarial advantage. For example, at $\eps=1$, the advantage bound for extracting the median numerical password is $\ll0.05$, but jumps to $0.24$ for the most probable password. Similarly, at $\eps=10$, the maximum advantage for any uniform canary is negligible, whereas the most probable PII and password canaries reach an advantage bound of $\approx1.0$. Ultimately, evaluating mechanisms solely on uniform canaries drastically underestimates the theoretical privacy leakage for individuals holding highly probable secrets.

\paragraph{Calibrating Epsilon to Reconstruction Risks.}
A natural research question that arises from our results is what would be an appropriate privacy level that protects against the different types of data extraction.
To answer this, for each prior distribution, we compute the $\eps_\textrm{protect}$ value that bounds the advantage for the worst-case canary to $\leq 0.05$ and plot it in Figure~\ref{fig:eps_protect}.
Subsequently, we note that in general as canaries become less probable, they become easier to protect even at low privacy levels e.g., a uniformly drawn 9 digit canary can be protected by $\eps_\textrm{protect} = 17.8$ but numerical passwords and PIIs require $\eps_\textrm{protect} = 0.25$ and $\eps_\textrm{protect} = 2.37$, respectively.
This shows that, even though a large $\eps$ has in the past proven effective at preventing the leakage of uniform canaries~\cite{carlini2019secret}, naturally occurring canaries might require a substantially smaller $\eps$ to be protected.
Furthermore, this also illustrates how practitioners can use our bounds in practice to calibrate the privacy parameters to specific privacy risks.

\begin{figure}[t]
    \centering
    \includegraphics[width=0.8\linewidth]{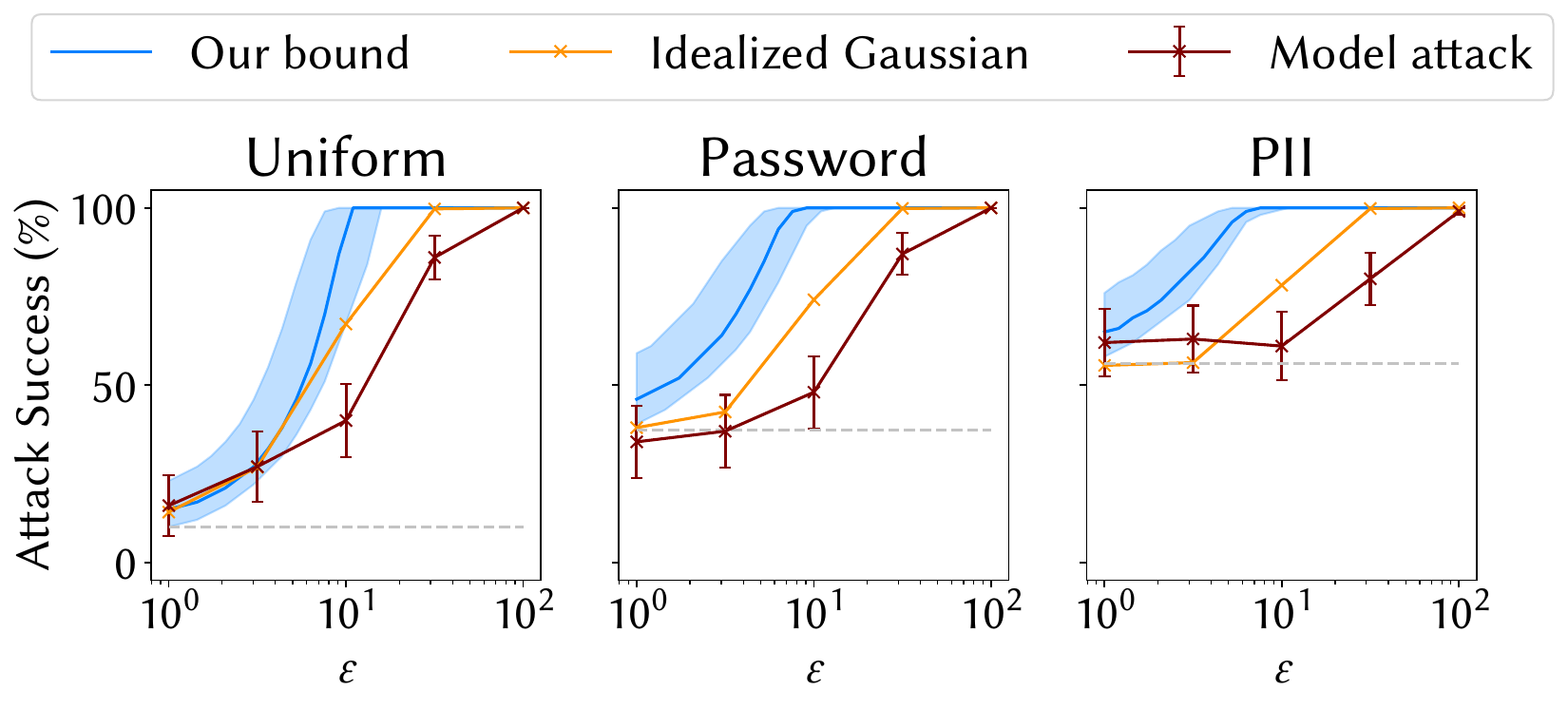}
    \caption{Compare our bounds with success of reconstructing 100 canaries (with 95\% CI) from a DP fine-tuned GPT-2 model at different privacy levels and fixed $\delta = 10^{-5}$. NB: Dotted gray line represents baseline attack w/o access to the DP mechanism.}
    \label{fig:llm}
\end{figure}

\paragraph{The Gap Between Theoretical and Concrete Attacks.}
In Figure~\ref{fig:llm} we compare our theoretical bounds (Theorem~\ref{thm:eps_bound_f_comp}) against concrete reconstruction accuracies of the model attack on GPT-2 fine-tuned on a 100 canaries. We perform hyper-parameter tuning to identify the best clipping norm $[1.0, 10.0, 100.0]$, learning rate $[0.01, 0.1, 1, 10]$, and number of epochs $[1, 3, 10, 32, 100]$ for each privacy level studied. By overlaying the Idealized Gaussian attack, Figure~\ref{fig:llm} clearly isolates two distinct gaps between our theoretical bound and the concrete model attack.

First, comparing our bound to the Bayes optimal baseline reveals the mathematical slack in our framework. For uniform canaries, our bounds align almost perfectly with the idealized attack. However, for non-uniform priors (passwords and PII), a gap persists between the bound and the optimal attack. This indicates that while our bounds are tight for uniform distributions, there is still room for theoretical refinement to tighten the bounds for mechanisms handling highly skewed priors.\footnote{It's not clear we could achieve a tighter bound without losing the generality of our framework, but perhaps one could derive a tighter bespoke bound for the Gaussian mechanism.}

Second, comparing the Idealized Gaussian to the Model Attack reveals that the model attack falls significantly short of the Bayes optimal baseline, particularly in the medium-privacy regime ($\eps=10$) and for non-uniform priors. On the other hand, it is close to the Idealized Gaussian for high privacy ($\varepsilon = 1.0$) and vacuous regimes ($\varepsilon = 100.0$). This empirical gap likely stems from documented limitations of black box auditing of DP-SGD, like the fact that the bound and idealized attack capture total privacy loss across all iterations, but the black-box model attack only observes the final model weights~\cite{annamalai2024nearly,cebere2025tighter,nasr2025the}. The attack would miss data that the model memorized early on and ``forgot'' during later epochs. 

\section{Conclusion}
\label{sec:conclusion}
In this paper, we introduced a novel framework that derives precise privacy leakage bounds for DP mechanisms under previously unexplored, realistic threat models—specifically those involving multiple targets, non-uniform prior distributions, and complex reconstruction metrics. Our framework equips practitioners with the theoretical foundation needed to interpret DP parameters and calibrate them against context-specific privacy risks rather than relying on worst-case heuristics.

\paragraph{Limitations and Future Work.}
While our bounds currently rely on decomposable success metrics, developing bounds for, non-decomposable success metrics remains an open challenge that would capture an even broader set of adversarial goals. Additionally, we encourage the privacy community to design and study more empirical DP attacks that explicitly incorporate non-uniform priors, as our results show these distributions dramatically alter the true risk landscape and can be harder to execute. Finally, adversaries rarely possess perfect knowledge of the true data-generating distribution. An important next step is understanding how these privacy bounds behave when the adversary only has access to an approximation of the exact prior.

\paragraph{Broader Impacts.} 
Our framework equips practitioners to calibrate privacy parameters to context-specific threat models. The primary risk of this framework is the potential for false confidence or policy misuse. Because our bounds can demonstrate that empirical risks are lower than worst-case MI guarantees, practitioners might be tempted to  justify weaker privacy protections. These bounds rely on strict structural assumptions (which may not hold for all use cases) and should not be used as a blanket reason to degrade privacy safeguards.

\begin{ack}
This work has been supported by the National Science Scholarship (PhD) from the Agency for Science, Technology and Research, Singapore. The authors would also like to thank Bogdan Kulynych for helpful discussions.
\end{ack}

{\small
\bibliographystyle{abbrvnat}
\bibliography{arxiv_main}
}

\appendix
\section{Pure DP Proofs} \label{sec:deferred_proofs}

Recall the definition of stochastic dominance.

\begin{definition}[Stochastic Dominance] Let $X, Y \in \R$ be random variables. We say $X$ is \emph{stochastically dominated} by $Y$ if for all $t \in \R$, $\Pr(X \geq t) \leq \Pr(Y \geq t)$. 
\end{definition}

\thmPureDP*
\begin{proof}
   Fix a product distribution distribution $\dd = \dd_1 \otimes \cdots \otimes \dd_n$, a $(\eps, 0)$-DP mechanism \mech, a metric \metric that decomposes into $\ell_1\cdots \ell_n$, and an attacker $\adv$, and fix any mechanism output $\mechout$. We can assume WLOG that the attacker $\adv$ is deterministic. Let $\targets = \adv(\mechout)$ be the attack attempts for our fixed mechanism output $\mechout$. 
   
   Following the techniques of Steinke, Nasr, Jagielski, we will prove the upper bound by induction over data records. Fix some $i \in [n]$ and fix $x_{< i} = (x_1, \ldots, x_{i-1}) \in Supp(\dd_1 \otimes \cdots \otimes \dd_{i-1})$. Then, applying Bayes' theorem and the $\eps$-DP guarantee:
    \begin{align*}
       & \Pr[\ell_i(X_i)  = 1 | \mech(X) = \mechout, X_{< i} = x_{<i}]\\
       & = \frac{\Pr[\mech(X) = \mechout | \ell_i(X_i) = 1, X_{< i} = x_{<i}] \cdot \Pr[\ell_i(X_i) = 1 | X_{< i} = x_{<i}]}{\Pr[\mech(X) = \mechout | X_{< i} = x_{<i}]}\\
       & = \frac{\Pr[\mech(X) = \mechout | \ell_i(X_i) = 1, X_{< i} = x_{<i}] \cdot \Pr[\ell_i(X_i) = 1]}{\sum_{b \in \{0,1\}}\Pr[\mech(X) = \mechout | \ell(X_i, \targets) = b, X_{< i} = x_{<i}] \cdot \Pr[\ell_i(X_i) = b]}\\
       & = \frac{1}{1 + \frac{\Pr[\mech(X) = \mechout | \ell_i(X_i) = 0, X_{< i} = x_{<i}] \cdot \Pr[\ell_i(X_i)=0]}{\Pr[\mech(X) = \mechout | \ell_i(X_i)=1, X_{< i} = x_{<i}] \cdot \Pr[\ell(X_i,\targets)=1]}}\\
       & \leq \frac{1}{1 + e^{-\eps} \cdot \frac{\Pr[\ell_i(X_i)=0]}{\Pr[\ell_i(X_i)=1]}}\\
       & = \frac{1}{1 + e^{-\eps} \cdot \frac{1-\Pr[\ell_i(X_i)=1]}{\Pr[\ell_i(X_i)=1]}}\\
       & = \frac{e^\eps}{e^\eps - 1 + \frac{1}{\Pr[\ell_i(X_i)=1]}}.
    \end{align*}
    
    By a similar argument,
    \begin{align*}
       & \Pr[\ell_i(X_i) = 1| \mech(X) = \mechout, X_{< i} = x_{<i}]\\ 
       & \in \left[\frac{e^{-\eps}}{e^{-\eps} - 1 + \frac{1}{\Pr[\ell_i(X_i)=1]}}, ~~  \frac{e^\eps}{e^\eps - 1 + \frac{1}{\Pr[\ell_i(X_i)=1]}}\right].
    \end{align*}

    Thus,
    \begin{align*}
        & \Pr[\ell_i(X_i, \targets) = 1| \mech(X) = \mechout, X_{< i} = x_{<i}]\\
        & \leq \frac{e^\eps}{e^\eps - 1 + \frac{1}{\Pr[\ell_i(X_i)=1]}} := \beta_i(\targets, \eps),
    \end{align*}

    Note that $\beta_i(\targets, \eps) \in [0,1]$ for all $\eps$ and all $\targets \in Supp(\dd)$. Now we will proceed with induction. Let $W_{i-1} = \sum_{m\in [i-1]} \ell_m(X_m, \targets)$ be the loss metric the first $i-1$ data records. Assume inductively that $W_{i-1}$ is stochastically dominated by $\hat{W}_{i-1} = \sum_{m\in [i-1]} S_m$ where $S_m \sim \bern(\beta_i(\targets, \eps))$. We have that, conditioned on $W_{i-1}$ the variable $\ell_i(X_i, \targets)$ is stochastically dominated by $\bern(\beta_i(\targets, \eps))$. Thus, we may apply Lemma 4.9 from Steinke, Nasr, Jagielski to conclude that $W_i = W_{i-1} + \ell_i(X_i, \targets)$ is stochastically dominated by $\hat{W}_i := \sum_{i\in [n]} S_i$ where $S_i \sim \bern(\beta_i(\targets, \eps))$.
\end{proof}

\subsection{Our bound is tight for RR with priors that are sufficiently close to uniform}
\label{sec:tight_rr}
Next, we show that the bound in Theorem~\ref{thm:eps_bound} is tight for randomized response on distributions that are not too far from uniform. 

\begin{proposition}\label{prop:pure_bnd_tight} 
    For some domain $[m]$ let $RR_\eps: \cX^n \to [m]$ be the mechanism that applies randomized response independently to each data record, and let \metric be the zero-one reconstruction loss. Then, for every data distribution \dd such that $\dd(a) \leq e^\eps \cdot \dd(b)$ for all $a, b\in [m]$, the bound in Theorem~\ref{thm:eps_bound} is achieved by the Bayes' optimal reconstruction attacker. That is, the bound is tight for this mechanism.
\end{proposition}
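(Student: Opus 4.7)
The plan is to promote Theorem~\ref{thm:eps_bound}'s inequality to an equality for $RR_\eps$ by computing the posterior distribution coordinate-wise, identifying the Bayes optimal attack, and leveraging independence of the mechanism across records. The near-uniformity hypothesis $\dd(a) \leq e^\eps \dd(b)$ enters precisely to pin down the form of that optimal attack.

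First I would apply Bayes' rule to get, for each coordinate $i$ and each pair $z, a \in [m]$,
\[
\Pr[X_i = z \mid RR_\eps(X_i) = a] \;\propto\; \dd_i(z) \cdot \Pr[RR_\eps(z) = a].
\]
Since $RR_\eps$ outputs $a$ with probability $e^\eps/(e^\eps + m - 1)$ when $z = a$ and $1/(e^\eps + m - 1)$ otherwise, the posterior ratio between the candidate $z = a$ and any competitor $z \neq a$ is $e^\eps \dd_i(a) / \dd_i(z)$. The hypothesis $\dd_i(a) \leq e^\eps \dd_i(z)$ for all $a, z$ forces this ratio to be at least $1$ for every competitor, so the Bayes optimal reconstruction attacker's guess for coordinate $i$ is precisely $\guesses_i = a_i$, the observed output.

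Next I would compute the conditional success probability on each record. A direct calculation gives
\[
\Pr[X_i = a_i \mid RR_\eps(X_i) = a_i] \;=\; \frac{e^\eps \dd_i(a_i)}{\dd_i(a_i)(e^\eps - 1) + 1} \;=\; \frac{e^\eps}{e^\eps - 1 + 1/\dd_i(a_i)}.
\]
Since $\guesses_i = a_i$ and $\metric$ is the zero-one reconstruction loss, $\Pr_{X \sim \dd_i}[\ell_i(X, \guesses) = 1] = \dd_i(a_i)$, so this posterior matches the flipping parameter $\beta_i(\guesses, \eps)$ from Theorem~\ref{thm:eps_bound} exactly.

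Finally, because $RR_\eps$ acts independently on each record, the joint conditional distribution of $X$ given $\mech(X) = \mechout$ is the product of the coordinate-wise conditionals. The indicators $\I[X_i = a_i]$ are therefore independent $\bern(\beta_i(\guesses, \eps))$ random variables conditional on $\mechout$, so $\metric(X, \guesses) = \sum_i \I[X_i = a_i]$ has, under the conditional law, exactly the distribution $\sum_i S_i$ appearing in the theorem's bound. Equality of the tail probabilities for every $v$ then follows pointwise in $\mechout$. The main obstacle is really just verifying Bayes optimality in the first step: if some $\dd_i(z)$ exceeded $e^\eps \dd_i(a)$, the optimal attacker would ignore the signal at coordinate $i$ and guess a globally heavier element, making the conditional success probability strictly smaller than $\beta_i$---this is exactly why the near-uniformity hypothesis is needed.
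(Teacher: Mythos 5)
Your proof is correct and follows essentially the same route as the paper's: apply Bayes' rule coordinate-wise for randomized response, observe that the near-uniformity hypothesis forces the posterior mode to be the observed output, and check that the resulting conditional success probability equals $\beta_i$ exactly. The one small thing you add beyond the paper's write-up is the explicit remark that independence of $RR_\eps$ across coordinates makes the conditional indicators independent Bernoullis, so the full tail bound (not just the per-coordinate marginal) is matched with equality; the paper leaves that last step implicit.
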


\begin{proof}
    Fix $n = k > 0$, and fix data domain $[m]$. Let $\metric(\ds, \guesses) = \sum_{i\in [n]} \I\{\ds_i = \guesses_i\}$. The optimal prior-aware adversary $R$ against randomized response will operate on each term $a \in [m]$ independently and for each one output $$ R(a, \dd, \eps) = argmax_{b\in [m]} \Pr_{X\sim \dd}(X=b| RR_\eps(X) = a).
    $$
    Thus, we have 
    \begin{align*}
        & \Pr_{X\sim \dd}(X=b| RR_\eps(X) = a)\\ 
        & = \frac{\Pr(RR_\eps(X) = a | X = b) \cdot \Pr(X=b)}{\Pr(RR_\eps(X) = a)}.
    \end{align*}
    Note that the conditional probability in the numerator is
    \begin{align*}
    & \Pr(RR_\eps(X) = a | X = b)\\
    & = \I\{a = b\} \cdot \frac{e^\eps}{e^\eps -1 + m} + \I\{a \neq b\} \cdot \frac{1}{e^\eps -1 + m}
    \end{align*}
    and the denominator is
    \begin{align*}
    & \Pr(RR_\eps(X) = a)\\
    & = \dd(a) \cdot \frac{e^\eps}{e^\eps -1 + m} + (1-\dd(a)) \cdot \frac{1}{e^\eps -1 + m}.
    \end{align*}
    
    Thus, when we simplify the terms we get
    \begin{align*}
        & \Pr_{X\sim \dd}(X=b| RR_\eps(X) = a)\\
        & = \begin{cases}
            \frac{\dd(a) e^\eps}{\dd(a) (e^\eps - 1) + 1} & \text{ if } a=b\\
            \frac{\dd(b)}{\dd(a) (e^\eps - 1) + 1} & \text{ if } a\neq b.\\
        \end{cases}
    \end{align*}
    
    Thus, the optimal adversary will use the following strategy:
    $$ R(a, \dd, \eps) = 
    \begin{cases}
        a & \text{ if } \dd(a) \cdot e^\eps \geq \dd(b) \text{ for all } b \in [m]\setminus \{a\}\\
        b^* = argmax_{b \in [m]} \dd(b) & \text{ otherwise}
    \end{cases}, $$
    that is, $R$ \emph{ignores} the prior if the mechanism output is consistent with the prior and otherwise \emph{defaults} to the prior.

    Now that we have defined the optimal adversary, for every fixed $a \in [m]$ we can show that the bound is tight for distributions $\dd$ such that $\dd(a) \cdot e^\eps \geq \dd(b)$ for all $b\in [m]$. 

    In this case, optimal attacker will output $a$ as its reconstruction guess. So, the attacker's success rate is simply 
    \begin{align*}
    & \Pr(X = a|RR_\eps(X) = a)\\
    & = \frac{\dd(a) e^\eps}{\dd(a) (e^\eps - 1) + 1} = \frac{e^\eps}{e^\eps - 1 + \frac{1}{\dd(a)}},
    \end{align*}
    which exactly matches the upper bound in Theorem~\ref{thm:eps_bound}.

\end{proof} %
\section{Approximate DP Theorems and  Proofs}\label{sec:approx_df_proofs}

\begin{restatable}[Extension of Prop 5.7 in \cite{steinke2024privacy}]{theorem}{thmApproxBndNonComp} \label{thm:approx_non_comp} 
    Let \mech satisfy $(\eps, \delta)$-DP. Let dataset $X \sim \dd = \dd_1 \otimes \ldots \otimes \dd_n$ be drawn from a product distribution. Let $\adv$ be an attacker and let \metric be decomposable into $\ell_1 \cdots \ell_n$. Then, for all $v\in \R$ and for all mechanism outputs $\mechout \subseteq Supp(\mech)$:
    
    \begin{align*}
        \Pr_{X\sim \dd}[\metric(X, \adv(\mechout)) \geq v |\mech(X) = a ] \leq \Pr_{S_i\sim \bern( \beta_i(\targets, \eps))}\left[ G(\mechout) + \sum_{i \in [n]} S_i \geq v\right],
    \end{align*}
    where $\targets = \adv(\mechout)$ and $\beta_i(\targets, \eps) =  \frac{e^\eps}{e^\eps - 1 + \frac{1}{\Pr_{X_i\sim \dd_i}[\ell_i(X_i,\targets)=1]}}$ and where $G: Range(\mech)^n \to \{0,\ldots, n\}$ is independent from $S_1, \ldots, S_n$ with $\E_{\Mechout \sim \mech(X),G}[G(\Mechout)] = n\cdot \delta$. 
\end{restatable}

\begin{proof}
    The proof follows same general proof structure as that of Proposition 5.7 in \citet{steinke2024privacy}.

    Fix a product distribution $\dd = \dd_1 \otimes \ldots \dd_n$, an $(\eps, \delta)$-DP mechanism \mech, a mechanism output $\mechout$, and an attack $\adv$. For $i \in [n] \cup \{0\}$, $x_{\leq i} \in \cX^i$, and $b \in \{0,1\}$, let $\mech(x_{<i})$ denote the distribution of $\mech(X)$ where $X\sim \dd$ conditioned on $X_{<i} = x_{<i}$. In an abuse of notation, we will use the shorthand $\mech(x_{<i}, b)$ to denote the distribution of $\mech(X)$  for $X\sim \dd$ conditioned on $X_{<i} = x_{<i}$ and $\ell_i(X_i, \adv(\mechout)) = b$ for $b\in \{0,1\}$. 
    
    Now, for distributions $P$ and $Q$ on $Range(\mech)$, let $E_{P,Q}$ be the randomized function promised by Lemma 5.6 in \cite{steinke2024privacy}. As in their analysis, the internal randomness of $E_{P,Q}$ is independent of everything else. Given a fixed sequence $x_1,...,x_n$, we will apply the lemma once for each index $i=1,...,n$, with $P = \mech(x_{<i}, 0)$ and $Q = \mech(x_{<i}, 1)$. 
    
    To simplify notation, in the remainder of the proof, we assume that the adversary's prediction algorithm $\adv$ is simply the identity; we can think of $\adv$ as being folded into  the loss functions $\ell_i$. 
    Thus we set $\targets = \adv(\mechout) = \mechout$ and aim to bound $\metric(X,\mechout)= \sum_i \ell_i(X_i,a)$.
    
    Thus, for all $i\in [n]$ and for all $x_{<i}$, and all mechanism outputs $\mechout$, we have
    \begin{align}
        \Pr_{\substack{X\sim \dd \\ \mechout \sim \mech(X), E}}
        \left[
            \ell_i(X_i, \mechout) = 1 \land E_{\mech(x_{<i}, 0), \mech(x_{<i}, 1)}(\mechout) = 1 {\Big |} X_{<i} = x_{<i}, \mech(X) = \mechout
        \right] \nonumber 
        \\
        \leq \frac{e^\eps}{e^\eps - 1 + \frac{1}{\Pr(\ell_i(X_i, \mechout) = 1)}}
        = \beta_i(\mechout,\eps),
        \label{eq:betabound}
    \end{align} 
    and for $b \in \{0,1\}$,
     \begin{align}
        \Pr_{\substack{X\sim \dd \\ \mechout\sim \mech(X), E}}\left[
            E_{\mech(x_{<i}, 0), \mech(x_{<i}, 1)}(\mechout) = 1 
            {\Big |} X_{<i} = x_{<i}, \ell_i(X_i, \mechout)=b
            \right] 
            \geq 1-\delta.
    \end{align}

    In the remainder of the proof we will define three quantities: $W$, which is what we want to upper bound but which isn't always well-behaved, $\widetilde{W}$ which is well-behaved but hard to quantify, and $\widehat{W}$ which is well-behaved and easy to quantify and upper-bounds the other two quantities (with an extra term $F$ to handle the poorly-behaved situations).
    
    Now, for fixed $x \in \cX^n$, $k \in [n]$, $\mechout \in Range(\mech)$, 
    let 
    $$\widetilde{W}_{i-1}(x, a) = \sum_{m \in [i-1]} \ell_m(x_m, \mechout) \cdot E_{\mech(x_{<i}, 0), \mech(x_{<i}, 1)}(\mechout)$$ 
    and let 
    $$\widehat{W}_{i-1}(\mechout) = \sum_{m\in [i-1]} S_m \, ,$$ 
    where $S_m \sim \bern(\beta_i(\mechout, \eps))$ independently for each $m \in [i-1]$

    \Cref{eq:betabound} shows that each term $S_m$ (in the $\hat W$'s) stochastically dominates the  term $\ell_m(x_m, \mechout) \cdot E_{\mech(x_{<i}, 0), \mech(x_{<i}, 1)}(\mechout)$
    (in the $\tilde W$'s), even conditioned on the corresponding terms with smaller indices.
    By induction (e.g., see Lemma 4.9 in \citet{steinke2024privacy}), for every $i\in [n]$ and every $\mechout$, the conditionally-distributed random variable $(\widetilde{W}_i(X, \mechout)|\mech(X) = \mechout)$, where $X\sim \dd$,  is stochastically dominated by $\widehat{W}_i(\mechout)$.

    Now, for fixed $\ds\in \cX^n$ and $\mechout \in Range(\mech)$, define

    $$
    G(\ds, \mechout) = \sum_{i\in[n]} \I\{E_{\mech(x_{<i}, 0), \mech(x_{<i}, 1)}(\mechout) = 0\},
    $$
    so that 
    \begin{align*}
    & W_n(\ds, \mechout) := \metric(x, \mechout) := \sum_{i \in [n]} \ell_i(\ds_i, \mechout)\\
    & \leq \widetilde{W}_n(\ds, \mechout) + G(\ds, \mechout).
    \end{align*}

    Since that the conditional distribution $(\widetilde{W}_n(X, \mechout)|\mech(X) = \mechout)$ where $X \sim \dd$ is stochastically dominated by $\widehat{W}_n(\mechout)$, we know that $W_n$ is stochastically dominated by the convolution $\widehat{W}_n(\Mechout) + G(X,\Mechout)$ for $\Mechout \sim \mech(X)$.

    To summarize: 
    \begin{align*}
    &(\widetilde{W}_n(X, a)|\mech(X) = \mechout) \sd \widehat{W}_n(\mechout) \quad \forall \mechout \\
    \implies & W_n(X, \Mechout) \sd \widehat{W}_n(\Mechout) + G(X,\Mechout) 
    \quad \text{for } \Mechout \sim \mech(X) \nonumber
    \end{align*}
    
    Furthermore, $\E[G(\ds, \mechout)] = \sum_{i\in[n]} \Pr[E_{\mech(x_{<i}, 0), \mech(x_{<i}, 1)}(\mechout) = 0] \leq n\cdot \delta.$ Lastly, since $\widehat{W}_{i-1}(\mech)$ does not depend on $X$, the input dataset $X$ does not contribute to the dependence between $G(X,\Mechout)$ and $\widehat{W}_{i-1}(\Mechout)$ so we can ignore this input to $F$ and assume $G(\Mechout) = G(X,\Mechout)$ for $X\sim \dd$.
\end{proof}

\begin{restatable}{theorem}{corApproxBndNonComp} \label{cor:approx_simple} 
    Let 
    \mech satisfy $(\eps, \delta)$-DP. Let dataset $X \sim \dd = \dd_1 \otimes \ldots \otimes \dd_n$ be drawn from a product distribution. Let $\adv$ be an attacker and let \metric be decomposable into $\ell_1 \cdots \ell_n$. Then, for all $v\in \R$:
    \begin{align*}
        \Pr_{\substack{X\sim \dd \\ \mechout \sim \mech(X)}}[\metric(X, \adv(\mechout)) \geq v] \leq \Pr_{\substack{X \sim \dd \\ \mechout \sim \mech(X) \\ S_i\sim \bern( \beta_i(\adv(\mechout), \eps))}}\left[\sum_{i \in [n]} S_i \geq v\right] + n \cdot \delta,
    \end{align*}
    where $\beta_i(\adv(\mechout), \eps) =  \frac{e^\eps}{e^\eps - 1 + \frac{1}{\Pr_{X\sim \dd_i}[\ell_i(X,\adv(\mechout))=1]}}$.
\end{restatable}

Note that since $\delta$ is typically much smaller than $1/n$, the additive term brought on by approximate DP is fairly small, however the flipping probability $\beta$ is now a random variable itself, which must be estimated from multiple runs of the mechanism on fresh data samples.

Next we present an approximate DP bound that gives a tighter dependence on $\delta$ and which can be computed in only one run of the mechanism. The flipping probabilities $\beta$ are computed using the Bayes optimal \textit{a priori} attack $\guesses^*$, which in general will roughly be the $k$ heaviest elements in the domains of $\dd_1, \ldots, \dd_n$.\footnote{If each $\dd_i$ is different and $k < n$ then computing $\guesses^*$ becomes more complicated.} When the prior distributions are close to uniform, then using $\guesses^*$ to compute the prior should be fairly tight, but for highly skewed data distributions, \Cref{thm:eps_bound_approx_comp} will likely be much tighter. 

\begin{restatable}[Generalization of Theorem 5.2 \cite{steinke2024privacy}]{theorem}{thmApproxComp}\label{thm:eps_bound_approx_comp}
Let \mech satisfy $(\eps, \delta)$-DP. Let dataset $X \sim \dd = \dd_1 \otimes \ldots \otimes \dd_n$ be drawn from a product distribution. Let $\adv$ be an attacker and let \metric be decomposable into $\ell_1 \cdots \ell_n$. Then, for all $v\in \R$,
\begin{align*}
    \Pr_{\substack{X \sim \dd \\ \mechout \sim \mech(X)}} [\metric(X, \adv(\mechout)) \geq v] \leq \Pr_{S^*_i \sim \bern(\beta_i(\eps))}\big[\sum_{i\in [n]} S^*_i \geq v\big] + \alpha \cdot n \cdot \delta, 
\end{align*}
where $\beta_i(\eps) = \frac{e^\eps}{e^\eps - 1 + \frac{1}{\Pr_{X\sim \dd_i}[\ell_i(X, \guesses^*) = 1]}}$ where $\guesses^*$ is the \textit{a priori} Bayes optimal attack attempt on \dd and  
\begin{align*}
    \alpha = \max\left\{\frac{1}{j} \left(\Pr_{S^*_i \sim \bern(\beta_i(\eps))}\left[\sum_{i\in [n]} S^*_i \geq v - j\right] -  \Pr_{S^*_i \sim \bern(\beta_i(\eps))}\left[\sum_{i\in [n]} S^*_i \geq v\right] \right) : j \in [n] \right\}.
\end{align*}
\end{restatable}

\begin{proof}

Fix an $(\eps,\delta)$-DP mechanism \mech, fix a metric \metric that is decomposable into $\ell_1 \cdots \ell_n$, and fix an attacker $\adv$. Let $\dd = \dd_1 \otimes \ldots \otimes \dd_n$ be the data distribution.

    We begin by using Theorem~\ref{thm:approx_non_comp} and taking the expectation over the potential mechanism outputs to get
    
    \begin{align*}
        & \Pr_{X\sim \dd, \mechout\sim \mech(X)}[\metric(X, \mechout) \geq v ]\\
        & \leq \Pr_{\substack{\mechout \sim \mech(X) \\ S_i\sim \bern( \beta_i(\mechout, \eps))}}\left[ F(\mechout) + \sum_{i \in [n]} S_i \geq v\right],
    \end{align*}
     where $\beta_i(\mechout, \eps) =  \frac{e^\eps}{e^\eps - 1 + \frac{1}{\Pr_{X\sim \dd_i}[\ell_i(X,\mechout)=1]}}$ and where $F: Range(\mech) \to \{0,\ldots, n\}$ is independent from $S_1, \ldots, S_n$ with $\E_{\Mechout \sim \mech(X),F}[F(\Mechout)] = n\cdot \delta$. 
    
    Since the constraints on $F$ are fairly simple, we will set up a linear program to find the optimal distribution for random variable $F(a)$ for a given mechanism output $a$ and threshold $v \in \R$. In this linear program, the variables are the values on $\Pr(F(\mechout) = j)$ for each $j \in \{0, \ldots, n\}$ and the two constraints are: (1) it must be a probability distribution, and (2) the expectation of $F$ is at most $n \cdot \delta$. So, for a fixed mechanism output $\mechout$ and threshold $v$:

    \begin{alignat*}{2}
        &\text{maximize}  &&  \Pr_{ S_i\sim \bern( \beta_i(\mechout, \eps))}\left[ F(\mechout) + \sum_{i \in [n]} S_i \geq v\right] \\
        & &&= \sum_{j=0}^n \Pr_F[F(\mechout) = j]) \cdot \Pr_{S_1 \ldots S_n}\left[\sum_{i \in [n]} S_i \geq v - j\right] \\
        & \text{subject to} \quad && \E_{F}[F(\mechout)] = \sum_{j = 0}^n \Pr[F(\mechout) = j] \cdot j \leq n \cdot \delta,\\
        & && \sum_{j = 0}^n \Pr[F(\mechout) = j] = 1, \text{and} \\
        & && \Pr[F(\mechout) = j] \geq 0 \quad \forall j\in\{0, \ldots, n\}.
    \end{alignat*}

    By strong duality, the linear program above has the same value as its dual (again, for fixed $\mechout, v$):

    \begin{alignat*}{2}
        &\text{minimize}  &&  n\delta\alpha + \gamma \\
        & \text{subject to} \quad && 
        \alpha \cdot j + \gamma \geq \Pr_{S_1 \ldots S_n}\left[\sum_{i \in [n]} S_i \geq v - j\right]\\
        & && \forall j\in \{0, \ldots, n\},\\
        & && \alpha \geq 0
    \end{alignat*}

    Any feasible solution to the dual gives an upper bound on the primal. So, we can use the solution given by 
    \begin{align*}
        & \gamma = \Pr[S^* \geq v] \quad \text{and}\\
        & \alpha = \max\left(\{0\} \cup \frac{1}{j}\Pr[S^* \geq v - j] - \gamma : j \in [n] \right),
    \end{align*}
    where $S^*$ is any distribution that satisfies \begin{equation*}
        \Pr_{S^*}[S^* \geq v - j] \geq \Pr_{S_i\sim \bern( \beta_i(\mechout, \eps))}\left[\sum_{i \in [n]} S_i \geq v-j\right]
    \end{equation*}
     for all $j \in \{0, \ldots, n\}$ and for all $\mechout \in Supp(\mech).$
     
     In particular, we can take $S^* = \sum_{i \in [n] }S^*_i$ where $S^*_i \sim \bern(\beta_i(\targets^*, \eps)$ where $\targets^*$ maximizes the attack success on $\dd$. 
\end{proof}

\section{Omitted $f$-DP Theorems and Proofs} \label{sec:f_dp_proofs}

First, recall the definition of $f$-DP.

\begin{definition}[Trade-off function~\cite{dong2019gaussian}]
    For any two probability distributions, $P$ and $Q$ on the same space, the trade-off function $T(P, Q): [0, 1] \rightarrow [0, 1]$ is defined as:
    \begin{equation*}
        T(P, Q)(\alpha) \triangleq \inf_{\phi} \{\beta_\phi: \alpha_\phi \leq \alpha\}
    \end{equation*}
    where the infimum is taken over all rejection rules $\phi$ for which $\alpha_\phi$ and $\beta_\phi$ are the type I and type II errors, respectively.
\end{definition}

\begin{definition}[$f$-DP~\cite{dong2019gaussian}]
    \label{def:fdp}
    A randomized mechanism $\mech : \mathcal{D} \rightarrow \mathcal{R}$ satisfies $f$-DP if, for any two adjacent datasets $\ds, \ds' \in \mathcal{D}$, and $\alpha \in [0, 1]$ it holds:
    \begin{equation*}
        T(\mech(\ds), \mech(\ds'))(\alpha) \geq f(\alpha)
    \end{equation*}
\end{definition}

We present two $f$-DP theorems that were omitted from the main body. Then, we will present the proofs of all three $f$-DP bounds.

\begin{restatable}[Generalization of Proposition 5.7~\cite{steinke2024privacy}]{theorem}{thmfDPBndNonComp} \label{thm:f_dp_non_comp} 
    Let \mech satisfy $f$-DP. Let dataset $X \sim \dd = \dd_1 \otimes \ldots \otimes \dd_n$ be drawn from a product distribution. Let $\adv$ be an attacker and let \metric be decomposable into $\ell_1 \cdots \ell_n$. Then, for all $v\in \R$, $\eps > 0$, and mechanism outputs $\mechout \subseteq Supp(\mech)$:
    
    \begin{align*}
        \Pr_{X\sim \dd}[\metric(X, \adv(\mechout)) \geq v |\mech(X) = a ] \leq \Pr_{S_i\sim \bern( \beta_i(\targets, \eps))}\left[ G(\mechout) + \sum_{i \in [n]} S_i \geq v\right],
    \end{align*}
    where $\targets = \adv(\mechout)$ and $\beta_i(\targets, \eps) =  \frac{e^\eps}{e^\eps - 1 + \frac{1}{\Pr_{X_i\sim \dd_i}[\ell_i(X_i,\targets)=1]}}$ and where $G: Range(\mech)^n \to \{0,\ldots, n\}$ is independent from $S_1, \ldots, S_n$ with $\E_{\mechout \sim \mech(X),G}[G(\mechout)] = n\cdot \delta_f(\eps)$. 
\end{restatable}

The main challenge with this bound is that $G$ is unspecified and generally depends on the mechanism.

In order to avoid flipping probabilities that depend on the attack as in \Cref{cor:f_basic} (and thus require multiple runs of the mechanism to estimate), we present a version of this bound that uses the worst-case prior instead. We compute
the $\beta_1, \ldots, \beta_n$ with respect to the guesses $\guesses^*$ that maximize the attack success on \dd. This approach provides a single distribution that dominates the true posterior distribution \emph{for all mechanism outputs}. Because we're essentially paying for the heaviest elements in $\dd$, the bound in Theorem~\ref{thm:eps_bound_approx_comp} is tightest for data distributions that are close to uniform.

\begin{restatable}[Generalization of Theorem 5.2 \cite{steinke2024privacy}]{theorem}{thmfComp}\label{thm:eps_bound_f_comp}
Let \mech satisfy $f$-DP. Let dataset $X \sim \dd = \dd_1 \otimes \ldots \otimes \dd_n$ be drawn from a product distribution. Let $\adv$ be an attacker and let \metric be decomposable into $\ell_1 \cdots \ell_n$. Then, for all $\eps>0$, $v\in \R$,
\begin{align*}
    \Pr_{\substack{X \sim \dd \\ \mechout \sim \mech(X)}} [\metric(X, \adv(\mechout)) \geq v] \leq \Pr_{S^*_i \sim \bern(\beta_i(\eps))}\big[\sum_{i\in [n]} S^*_i \geq v\big] + \alpha \cdot n \cdot \delta_f(\eps), 
\end{align*}
where $\beta_i(\eps) = \frac{e^\eps}{e^\eps - 1 + \frac{1}{\Pr_{X\sim \dd_i}[\ell_i(X, \guesses^*) = 1]}}$ where $\guesses^*$ is the \textit{a priori} Bayes optimal attack attempt on \dd and  
\begin{align*}
    \alpha = \max\left\{\frac{1}{j} \left(\Pr_{S^*_i \sim \bern(\beta_i(\eps))}\left[\sum_{i\in [n]} S^*_i \geq v - j\right] -  \Pr_{S^*_i \sim \bern(\beta_i(\eps))}\left[\sum_{i\in [n]} S^*_i \geq v\right] \right) : j \in [n] \right\}.
\end{align*}
\end{restatable}
The approx-DP analog of this bound is \Cref{thm:eps_bound_approx_comp}.

To prove \Cref{thm:f_dp_non_comp}, \Cref{cor:f_basic}, and \Cref{thm:eps_bound_f_comp}, will use the fact that $f$-DP is known to be equivalent to an infinite collection of $(\eps, \delta_f(\eps))$-DP guarantees (Proposition 2.12~\cite{dong2019gaussian}) where $\delta_f(\eps) = 1 + f^*(-e^\eps)$ and $f^*$ is the convex conjugate of $f$. The upper bounds for the entire collection of $(\eps, \delta(\eps))$-DP guarantees will simply be satisfied simultaneously by an $f$-DP mechanism. Thus, the proofs for \Cref{thm:f_dp_non_comp}, \Cref{cor:f_basic}, and \Cref{thm:eps_bound_f_comp} follow immedately from their $(\eps, \delta)$-DP analogues, \Cref{thm:approx_non_comp}, \Cref{cor:approx_simple}, and \Cref{thm:eps_bound_approx_comp}, respectively.

\section{Idealized Gaussian Attack}
\label{app:idealized_gaussian}

The idealized setting for DP-SGD is a simple gaussian mechanism that calculates the noisy sum of orthogonal unit vectors, i.e., in Algorithm~\ref{alg:dist_recon}, each $\mathcal{D}_i$ is a distribution over $k$ $d$-dimensional orthogonal unit vectors $\{V^i_1, \dots, V^i_k\}$ and $a = \mathcal{M}(X_1, \dots, X_n) = \sum_i X_i + \mathcal{N}(0, \sigma^2I)$. In this case, the bayes optimal adversary $\mathcal{A}$ reconstructs each sample by choosing the vector with highest probability combined with the prior, i.e., $\mathbf{z}_i = \argmax_{V^i_j} \Pr[X_i = V^i_j | a] \propto \argmax_{V^i_j} \Pr[a | X_i = V^i_j] \cdot \Pr[V^i_j \sim \mathcal{D}]$. Lastly, the success of the adversary is exact match, i.e., $\mathcal{L}(X, \mathbf{z}) = \sum_i \I\{X_i = \mathbf{z}_i\}$.

In our experiments, since we are reconstructing a single digit, we set $k = 10$ and set $d = n * k$ where $n$ is the number of canaries we are reconstructing. Each $V^i_j$ is then a one-hot vector where the 1 is at index $i * k + j$. This ensures that the canaries both \emph{within} and \emph{between} each distribution $\mathcal{D}_i$ are orthogonal to each other. Note that this setting is functionally equivalent to an adversary with access to all intermediate models in DP-SGD in the case that the gradients of all possible canaries are orthogonal to each other (e.g., DP auditing~\cite{mahloujifar2024auditing,nasr2023tight}).

\section{Bounding Tabular Data Reconstruction from Noisy Marginals}
\label{sec:exps}
In this section, we investigate the tightness of our theoretical framework with respect to another commonly used DP mechanism.
Specifically, we compare our theoretical bounds against the success rate of a multi-attribute inference (M-AI) attack run against noisy $k$-way marginals, a commonly used DP mechanism in synthetic data.
In the process we also show how our framework can model novel privacy risks and attacks targeting multiple users.

\subsection{Experimental Setup}
\paragraph{$k$-way marginals.}
$k$-way marginals are useful summary statistics that effectively describe tabular datasets.
Specifically, a $k$-way marginal counts the number of records in the dataset whose values agree on $k$ attributes, e.g., ``How many people are \textbf{aged 40}, \textbf{unemployed} and \textbf{unmarried}'' is a 3-way marginal.

Although $k$-way marginals are useful summary statistics, they have been known to be vulnerable to powerful attacks~\cite{homer2008resolving,kasiviswanathan2010price,dick2023confidence}.
Therefore, calibrated Gaussian noise is typically added to these marginals so that they satisfy DP, which forms the basis for many popular DP synthetic data algorithms such as MST~\cite{mckenna2021winning}, AIM~\cite{mckenna2022aim}, and RAP~\cite{aydore2021differentially}.

\paragraph{Noisy $k$-way marginals.}
In our experiments, we assume the ``replace-one'' adjacency and additionally assume that for a given set of attributes, full marginal vectors across all possible values are released.
Therefore, we calibrate the level of Gaussian noise added to a $L_2$ sensitivity of $\sqrt{2}$ and $\delta = 10^{-5}$ for the entire marginal vector using Gaussian DP~\cite{dong2019gaussian} accounting.
Furthermore, as we are mainly focused on M-AI, we only calibrate the noise for the attributes that are \emph{unknown}, i.e., when $d'$ columns are unknown out of $d$ columns, the noise is calibrated to $\binom{d}{k} - \binom{d - d'}{k}$ compositions of the noisy $k$-way marginal mechanism.
Lastly, we fix $k = 3$ as these were the most powerful attacks in prior work~\cite{dick2023confidence}.
We provide a formal definition of the noisy $k$-way marginal mechanism below:

\begin{definition}[\textbf{$k$-way marginal query}]
Given a dataset $X$ over a data domain with $d$ attributes, $\mathcal{X} = \mathcal{X}_1 \times ... \mathcal{X}_d$, a $k$-way marginal query is defined by a subset of attributes $V \subseteq [d]$, $|V| = k$ and corresponding values for each of the attributes $v \in \prod_{i \in V} \mathcal{X}_i$. Given the pair $(V, v)$, define $\mathcal{X}(V, v) = \{x \in \mathcal{X} : x_i = v_i\; \forall i \in V \}$. The corresponding $k$-way marginal query is defined as follows where $\mathds{1}$ is the indicator function that maps elements of the set to 1 and the rest to 0:  
\begin{equation*}
    Q_{V, v}(X) = \frac{1}{|X|} \sum_{x \in X} \mathds{1} (x \in \mathcal{X}(V, v))
\end{equation*}
\end{definition}

\begin{definition}[\textbf{Noisy $k$-way marginal}]
    Given a dataset $X$ over a data domain with $d$ attributes, $\mathcal{X} = \mathcal{X}_1 \times \dots \times \mathcal{X}_d$. The noisy $k$-way marginal mechanism is defined by a subset of attributes $V \subseteq [d]$ and noise level $\sigma$ and outputs:
    \begin{equation*}
        \mathcal{M}_{V, \sigma}(X) = \left(Q_{V, v}(X) : \forall v \in \prod_{i \in V} \mathcal{X}_i\right) + \mathcal{N}(0, 2 \sigma^2I)
    \end{equation*}
\end{definition}

\paragraph{Adapting RAP attack.}
In prior work,~\citet{dick2023confidence} present a reconstruction attack against \emph{exact} $k$-way marginals by using the optimization objective introduced by~\citet{aydore2021differentially}, called the Relaxed Adaptive Projection (RAP) attack.
Specifically, they relax discrete datasets and the corresponding $k$-way marginals to the continuous domain and use standard optimization techniques such as Stochastic Gradient Descent to reconstruct the dataset using the following optimization objective: $\argmin_{\tilde{X'}} ||\tilde{Q}(\tilde{X}) - \tilde{Q}(\tilde{X'})||_2$ where $\tilde{Q}$, $\tilde{X}$, and $\tilde{X'}$ are the relaxed marginals, original dataset, and reconstructed dataset, respectively.
Finally, the relaxed reconstructed dataset is projected back to the discrete domain using rounding techniques.

In our experiments, we make three main modifications to what has been done in prior work and refer to this modified attack as our ``M-AI attack''.
First, we evaluate the effectiveness of the RAP attack against \emph{noisy} $k$-way marginals instead of exact.
By doing so, we measure the gap between our theoretical success bounds provided by DP and concrete attacks that can be run against DP mechanisms.
On the other hand,~\citet{dick2023confidence} focus on non-private mechanisms only.

Second, we adapt the RAP attack to perform \emph{multi-attribute inference} (M-AI) instead of reconstruction.
Unlike reconstruction, in M-AI the adversary is given access to a subset of attributes of the original dataset and tasked with reconstructing the remaining subset of attributes.
Naively, the reconstructed dataset $\tilde{X'}$ could be instantiated with these known attributes as the prior~\cite{dick2023confidence}.
However, we note that doing so does not guarantee that the final reconstructed dataset will be consistent with these known attributes, which might weaken the M-AI attack.
Therefore, along with instantiating $\tilde{X'}$ with the known attributes, during the optimization process, we additionally freeze these parameters so that the optimization process focuses purely on reconstructing the unknown attributes.

Lastly, instead of deriving the prior implicitly from other similar datasets, in our experiments we \emph{explicitly calculate the prior}.
Specifically, we fit a Bayesian Network (BayNet) model~\cite{koller2009probabilistic} to raw datasets and sample the ``original'' datasets from this model that the adversary then attempts to reconstruct.
We do so as this enables us to attack realistic looking datasets, while at the same time explicitly defining the prior distribution necessary for our bounds calculations.
Therefore, in our modified RAP attack, we initialize the unknown attributes of $\tilde{X'}$ to the explicit prior distribution conditioned on the known attributes calculated from the BayNet model, as opposed to initializing them to an adjacent dataset as done previously~\cite{dick2023confidence}.

\paragraph{Datasets.}
We experiment with two raw datasets in this work that have been commonly used for DP research~\cite{cai2021data,mckenna2021winning,mckenna2022aim}: American Community Survey (ACS)~\cite{acs2025} and San Francisco Fire Department Calls for Service (FIRE)~\cite{fire2022}.
Following the data preparation in prior work~\cite{annamalai2024linear}, we discretize and post-process the two datasets to have 16 and 10 attributes, respectively.
These datasets are then treated as the ``raw'' datasets used to fit a BayNet model, from which 100 fresh records are sampled as the ``original'' dataset.

\subsection{1 Column Reconstruction}
\begin{figure}[t]
    \centering
    \captionsetup[subfigure]{justification=centering}
    \subfloat[ACS]{
        \includegraphics[width=\linewidth]{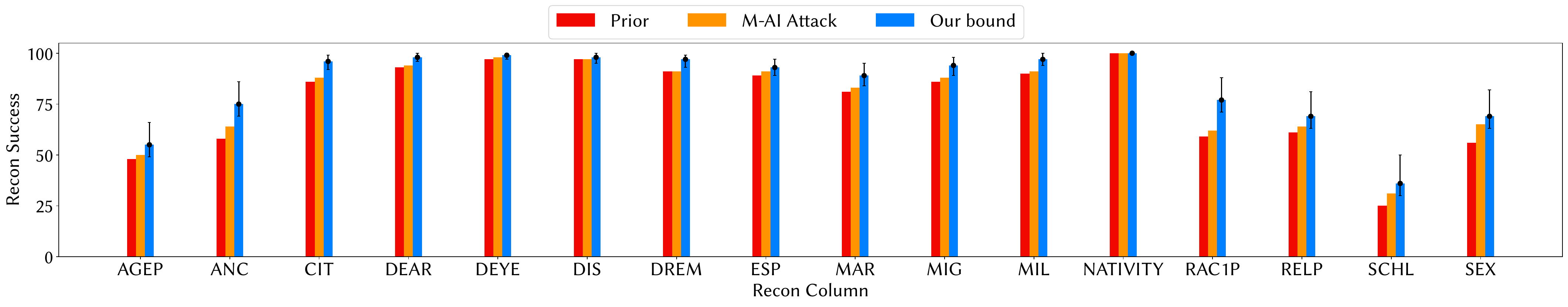}
        \label{fig:recon_1col_acs}
    }
    \\
    \subfloat[FIRE]{
        \includegraphics[width=\linewidth]{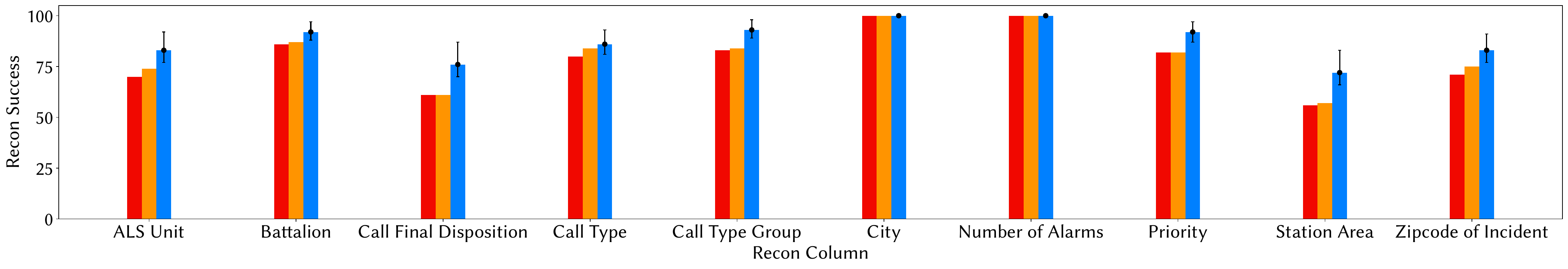}
        \label{fig:recon_1col_fire}
    }
    \caption{Success of reconstructing a single column at $\eps = 1$ and $\delta = 10^{-5}$ for different datasets and select columns.}
    \label{fig:recon_1col}
\end{figure}

We first begin with reconstructing a single column.
Note that this is very similar to traditional AI attacks, but we reconstruct the entire column for all records in the dataset, as opposed to AI attacks that typically only reconstruct a single attribute of a single record.
In Figure~\ref{fig:recon_1col} we report the total number of correctly reconstructed records when using our M-AI attack to reconstruct different columns from noisy $3$-way marginals of different datasets at a privacy level of $\eps = 1$.
Additionally, we compare the success of our attack with our theoretical upper bounds (derived from Theorem~\ref{thm:eps_bound_f_comp}) at varying levels of confidence (5\%, 50\%, 95\%) as error bars and the success of attempting to reconstruct the column from the prior distribution (derived from the BayNet model) alone.

First, we find that in such real-world settings the bounds can sometimes be vacuous on simple tasks such as reconstructing a single attribute.
Specifically, even for a high privacy level ($\eps = 1$), our theoretical bounds predict that most of the columns can be reconstructed almost fully ($\geq 80 / 100$) with a high level of confidence (95\%), with the exception of `AGEP' and `SCHL' columns of the ACS dataset.
This is mainly because real-world priors can be very large when reconstructing a single column.
Hence, we echo the concerns of~\citet{jayaraman2022attribute}, as care has to be taken when evaluating privacy attacks with respect to large priors in real-world datasets, which might seemingly lead to successful attacks even against DP mechanisms.
However, these attacks may not necessarily be \emph{significant} as they may not perform much better than the prior or anywhere close to the theoretical limit posed by DP.

Second, similar to our previous findings on data extraction, we note that real-world state-of-the-art attacks are still far from reaching our bound.
Although across both datasets and all columns, the success of our M-AI attack either matches or exceeds the success of using the prior alone, our M-AI attack is still far from realizing the limit of our theoretical bounds in many settings.
For instance, our M-AI attack achieves $65\%$ reconstruction success for the `SEX' attribute in the ACS dataset compared to the prior, which only achieves $56\%$.
However, our 95\% confidence level upper bound on the reconstruction success is $84\%$.

\begin{figure}[t]
    \centering
    \captionsetup[subfigure]{justification=centering}
    \subfloat[ACS, `SCHL']{
        \includegraphics[width=0.3\linewidth]{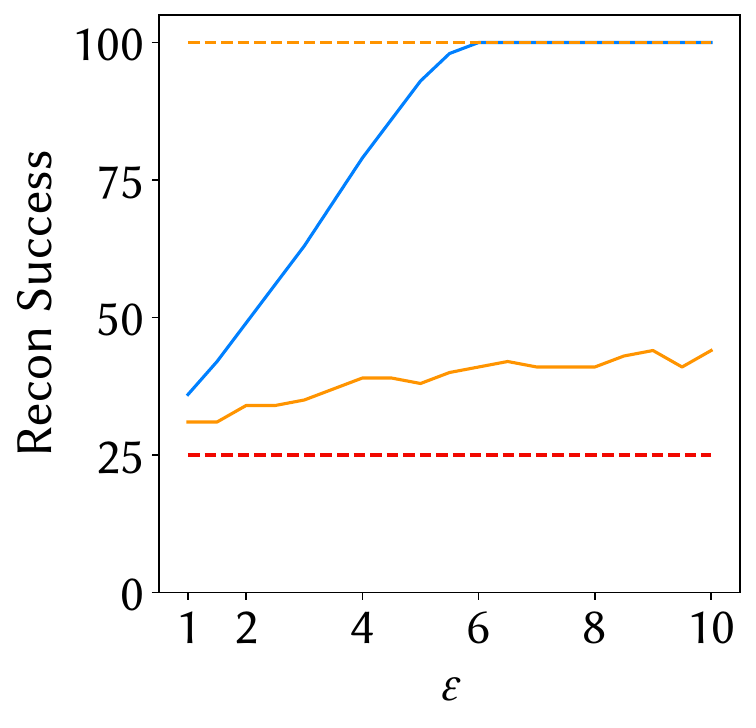}
        \label{fig:recon_1col_multeps_acs}
    }
    \subfloat[FIRE, `ALS Unit']{
        \includegraphics[width=0.3\linewidth]{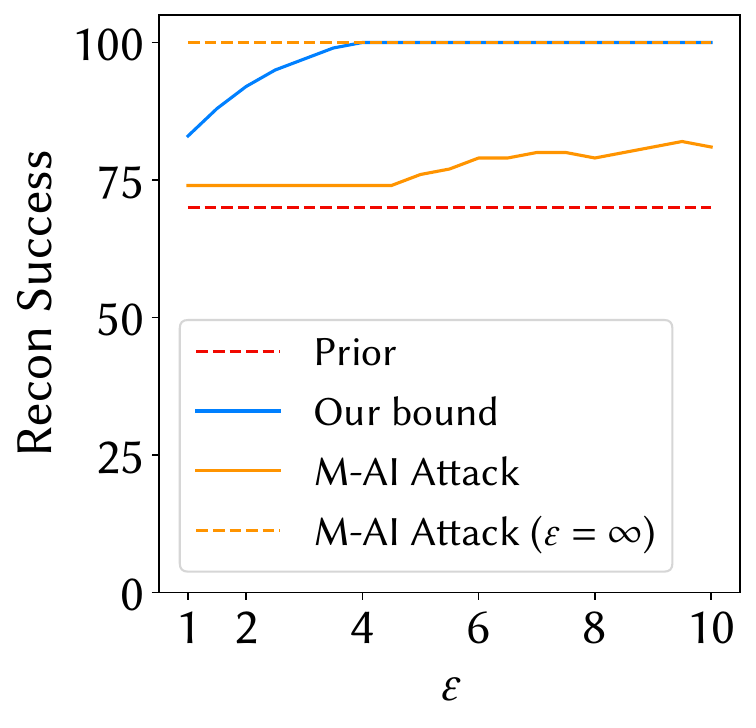}
        \label{fig:recon_1col_multeps_fire}
    }
    \caption{Success of reconstructing a single column at varying privacy levels $\eps$ and fixed $\delta = 10^{-5}$.}
    \label{fig:recon_1col_multeps}
\end{figure}

\paragraph{Impact of privacy level $\eps$.}
Next, in Figure~\ref{fig:recon_1col_multeps} we look at the impact of the privacy level $\eps$ on the reconstruction success of our bound and our M-AI attack.
To that end, we vary the $\eps$ and plot the corresponding success of reconstructing the `SCHL' and `ALS Unit' columns of the ACS and FIRE datasets, respectively.
Overall, as the $\eps$ increases, our attack becomes more significant, outperforming the prior more and becoming closer to our theoretical bounds.
In fact, for the vacuous privacy guarantee of $\eps = \infty$ our attack matches the theoretical bound and completely reconstructs the target columns of all records, whereas the prior success remains very low.
This indicates that, our M-AI attack is not fundamentally limited in its ability to leverage privacy leakage.
However, the success of state-of-the-art attacks that might work well against non-private mechanisms may not necessarily transfer to significant attacks for private mechanisms, especially at high levels of privacy.
Therefore, we believe that more research should focus on designing attacks specifically for DP mechanisms that can better leverage the privacy leakage and prior probabilities from these mechanisms specifically.

\subsection{Multi-Column Reconstruction}
\begin{figure}[t]
\centering
    \captionsetup[subfigure]{justification=centering}
    \subfloat[ACS]{
        \includegraphics[width=0.3\linewidth]{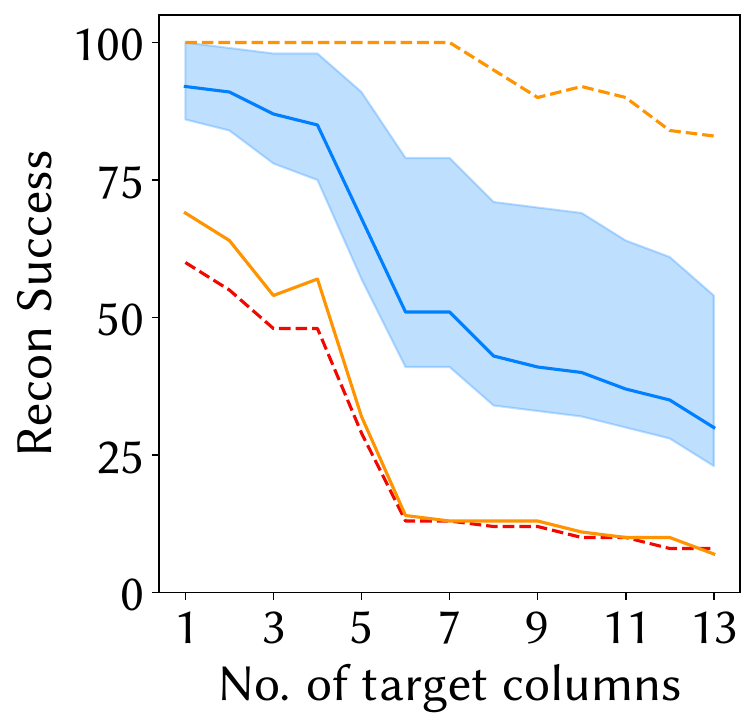}
        \label{fig:recon_multicol_acs}
    }
    \subfloat[FIRE]{
        \includegraphics[width=0.3\linewidth]{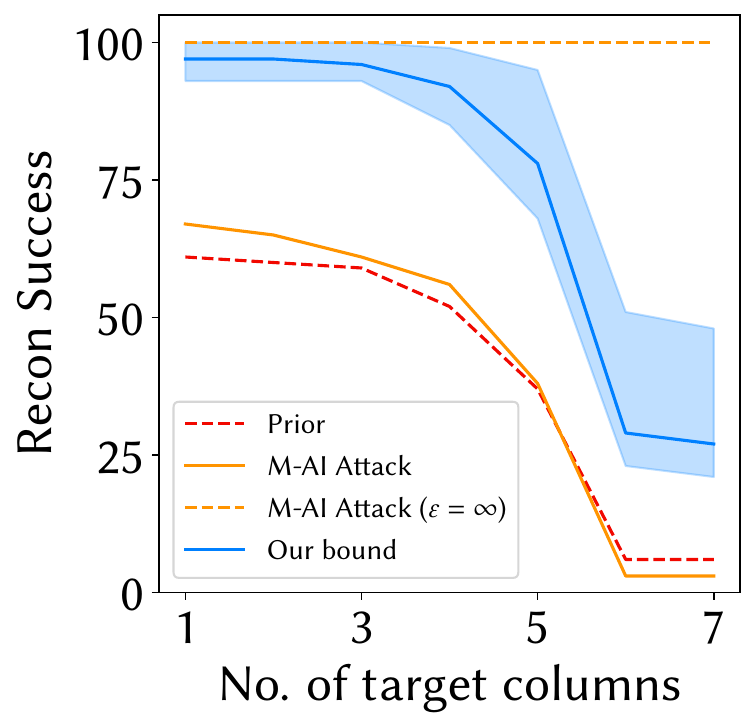}
        \label{fig:recon_multicol_fire}
    }
    \caption{Success of reconstructing multiple columns at $\eps = 3$ and $\delta = 10^{-5}$.}
    \label{fig:recon_multicol}
\end{figure}

We evaluate the success of our M-AI attack when reconstructing varying number of attributes and compare it with our theoretical bounds at fixed $\eps = 3$ in Figure~\ref{fig:recon_multicol}.
To that end, for each raw dataset, we fit a single BayNet model and reconstruct an increasing number of columns in the visit order of the network.
Note that in order to perform M-AI, a fixed number of columns must be known to match the reconstructed records with the original records, which we set to 3 in this case.
Therefore the maximum number of columns reconstructed for the ACS and FIRE datasets are 13 and 7, respectively.

Here, we observe that our M-AI attack remains most significant when the number of target columns reconstructed is low and quickly converges to the success of the prior.
For instance, for both datasets, we observe that when reconstructing more than 5 columns, our M-AI attack performs equivalently to just using the prior alone.
Furthermore, for the FIRE dataset, even though the theoretical bound stays roughly the same when reconstructing 3 columns as 1 column, the success of the M-AI attack reduces in tandem with the success of the prior.
This indicates that the M-AI attack is not optimal and largely depends on the prior as it struggles to reconstruct records as effectively under more difficult settings, even when the privacy leakage is the same.

\subsection{Approximate Reconstruction}
\begin{figure}[t]
\centering
    \captionsetup[subfigure]{justification=centering}
    \subfloat[ACS, `AGEP']{
        \includegraphics[width=0.3\linewidth]{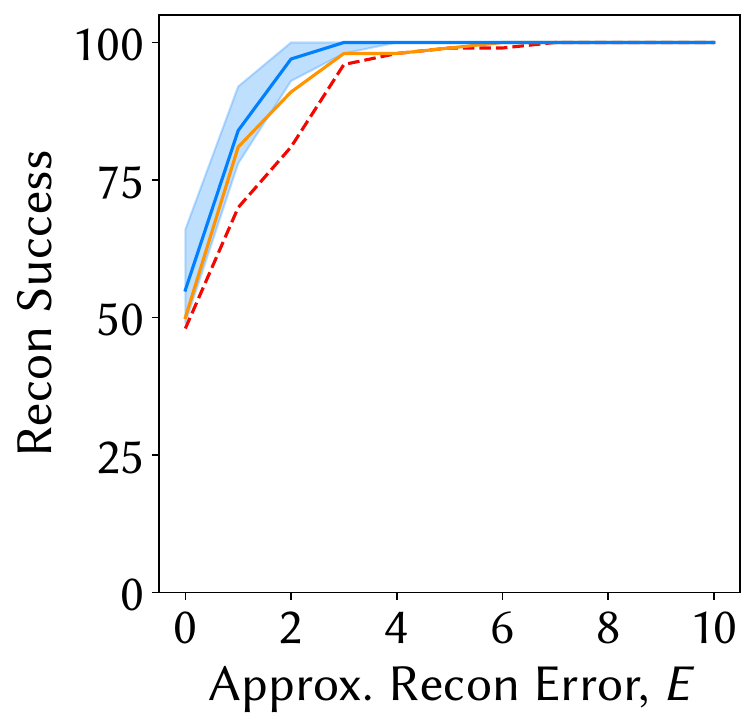}
        \label{fig:recon_multicol_approx_acs}
    }
    \subfloat[FIRE, `Zipcode of Incident']{
        \includegraphics[width=0.3\linewidth]{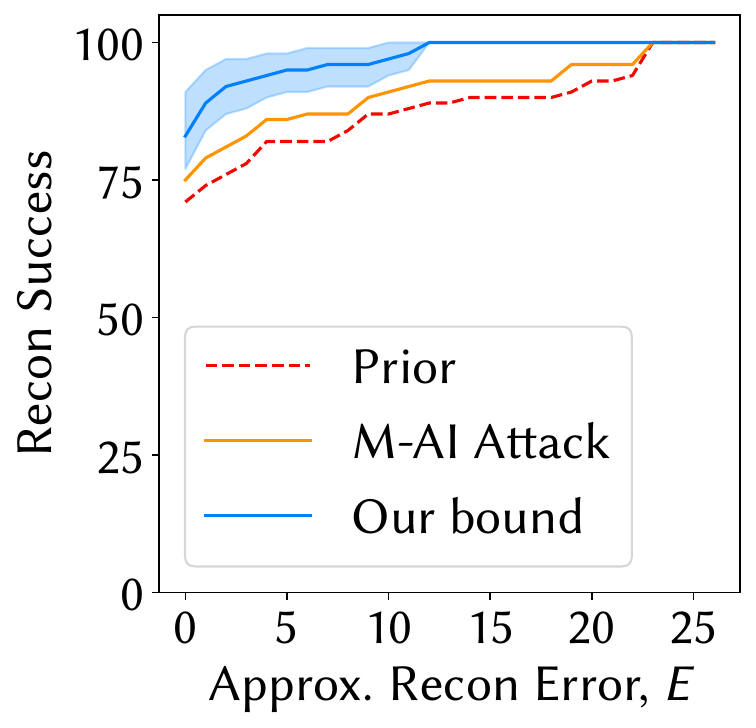}
        \label{fig:recon_multicol_approx_fire}
    }
    \caption{Success of approximately reconstructing a single column (up to $L_1$ error) at $\eps = 1$ and $\delta = 10^{-5}$.}
    \label{fig:recon_approx}
\end{figure}

Lastly, one of the key features of our framework is its ability to support both \emph{complex success metrics} and \emph{multiple targets} simultaneously, which none of the prior work has achieved before.
To that end, in this section, we focus on approximately reconstructing multiple records at the same time.

In Figure~\ref{fig:recon_approx} we report our reconstruction success when reconstructing the `AGEP' and `Zipcode of Incident' columns of the ACS and FIRE datasets, respectively, and compare it against our bounds when the success metric is defined as being within some $L_1$ error $E$, i.e., $\ell_i(X_i, \guesses) = \I\{|X_i - \guesses_i| \leq  E\}$.
Note that the `AGEP' and `Zipcode of Incident' columns have a domain of 10 and 26, respectively, which corresponds to the maximum reconstruction error possible in the Figure.
When $E = 0$, this corresponds to the ``exact match'' metric we have been using previously.

Here, we observe that even for approximate definitions of reconstruction, our attack consistently performs better than the prior, at times even matching our bound, albeit at very low levels of confidence.
This shows that not only can our framework successfully handle complex success metrics such as approximate reconstruction, but also that real world attacks might potentially perform better with respect to complex metrics as compared to simple metrics, which can be overly restrictive.

\subsection{Key Takeaways}
We show that while our bounds hold empirically, they remain loose when compared to the reconstruction success observed from actual attacks.
We believe that the main reason for this is that current state of the art reconstruction attacks may not be optimal for DP mechanisms, even though they work well under the non-private setting.
Nevertheless, in this section we show how our framework is capable of modelling \emph{novel privacy risks}, providing meaningful \emph{population-level} attack success bounds, and accommodating \emph{complex success metrics}, which have not been simultaneously captured by any of the prior work discussed. %
\section{Bridging Existing Attack Frameworks}
\label{sec:bridging_attack}
In this section, we show how our framework bridges and extends the current state of the art frameworks bounding the privacy leakage for a handful of influential works.

\subsection{Privacy Auditing in One Run} 

The influential work of Steinke, Nasr, and Jagielski \cite{steinke2024privacy} is perhaps the most closely related work to ours. At a high level, their work also gives a high probability bound on the success of an adversary with multiple targets, but only for MI. In Section~\ref{sec:encoding_attacks} we describe how to encode their specific MI setting in our bounds.
We generalize their bounds beyond MI attacks in two ways. First, we allow the data to be generated according to general product distributions rather than just sampling membership bits. Second, we generalize beyond measuring success via exact match by incorporating complex success metrics into the prior success probability.

\paragraph{Auditing $f$-Differential Privacy in One Run (MMC).}
Mahloujifar, Melis, and Chaudhuri~\cite{mahloujifar2024auditing} also derive MI bounds for $f$-DP mechanisms, which better captures common mechanisms like the Gaussian mechanism. They also generalize their membership inference bounds to what we call a ``multi-membership inference'' attack where each canary is drawn uniformly from a set of possible canaries, and the attacker is successful if they correctly guess the \emph{index} of the true canary out of the possible ones.

Multi-membership has a number of limitations in real-world applicability. In multi-membership inference, canaries have to have some ordering within their domain and attacks typically search over the potentially large domain of canaries to find the correct index. Furthermore, multi-membership inference cannot handle approximate reconstruction, since the attack success is measured by exactly matching the \emph{index} of the canary output by the adversary. On the other hand, in a full reconstruction attack, which is handled by our framework, the adversary outputs \emph{the canary itself} (or an approximation of it) that does not need any ordering and can be compared against original records using potentially approximate distance functions. Lastly, the MMC bound is not flexible enough to handle cases where there are more guesses than canaries, i.e., $k > n$.

 \subsection{Narcissus Resiliency} 

Perhaps the most interesting connection is that to the Narcissus Resiliency framework proposed by Cohen et al.~\cite{cohen2025data}. They define a general framework for evaluating a wide range of data reconstruction attacks and protections, including: Predicate singling out~\cite{cohen2020towards}, membership inference, differential privacy, one-way functions~\cite{diffie2022new, yao1982theory}, and various guarantees of encryption schemes. The core insight of their framework is to measure the attacker's success probability relative to its baseline probability of success on a fresh sample of data rather than using a fixed baseline for all attackers. A mechanism is considered ``narcissus resilient'' if the two success probabilities are close for all attackers. Specifically,

\begin{definition}[Narcissus resiliency, \cite{cohen2025data}]\label{def:narcissus_resil} Let \class be a class of data distributions and let \metric be a predicate over dataset, attack attempt pairs. Algorithm \mech is $(\eps, \delta, \class)$-\metric-Narcissus-resilient if for all $\dd \in \class$ and for all attackers $\adv$ it holds that
\begin{equation*}
    \Pr_{\substack{X \sim \dd \\ \mechout \gets \mech(X) \\ \guesses \gets \adv(\mechout)}} [\metric(X, \guesses) = 1] \leq e^\eps \cdot \Pr_{\substack{X \sim \dd\\ Y \sim \dd \\ \mechout \gets \mech(X) \\ \guesses \gets \adv(\mechout)}} [\metric(Y, \guesses) = 1] + \delta.
\end{equation*}
\end{definition}

We can directly compare our Corollary~\ref{cor:approx_simple} to the Narcissus Resiliency bounds for $(\eps, \delta)$-DP mechanisms. In Theorem~F.1, they bound the RHS by $e^\eps\cdot p + \delta$ where $p$ is the prior probability that predicate $\metric$ is true. This is always strictly larger than our bound of $\frac{e^\eps}{e^\eps + 1/p} + \delta$, especially when $p$ or $\eps$ are relatively large. In addition, we give high probability bounds that hold even when conditioned on the mechanism output, which describes the dependence on the number of successful attacks and the privacy parameters. Therefore, while our bounds enjoy the same philosophical benefits that inspired the narcissus resiliency framework, our bounds on DP mechanisms are substantially tighter.

\subsection{Reconstruction Robustness (ReRo)}

The reconstruction robustness (ReRo) game in \cite{balle2022reconstructing, hayes2024bounding} can be viewed as a specific case of our setting when there is only a single target, $n=1$, and a single reconstruction guess, $k=1$, and the probability is taken over the coins of the mechanism as well as the sampling of the target record. That is, translating the pure DP version of their bound (Corollary 3 in \cite{balle2022reconstructing}) into our notation, we get for $X'\sim \dd$
\begin{align}
    \E_{\mechout \sim \mech(X')}\left[\Pr_{\substack{X\sim \dd }}[\metric(X, R(\mechout))=1]\mid \mech(X) = \mechout \right]  \leq  e^\eps \cdot \sup_{z \in Supp(\dd)}\Pr_{X \sim \dd}[\metric(X, z)=1], \label{eq:rero}
\end{align}
where $\metric(X, z) = \I\{\ell(X, z) \leq \eta\}$ is some approximate reconstruction measure. In the ReRo game only one data record, $X$, is unknown while the rest are fixed and known to the adversary (one can imagine that \mech has the known records hardcoded).

Notice with these specific choices, the LHS is equal to 
$\Pr_{X\sim \dd, A \gets \mech(X)}(\ell(X, R(A)) < \eta)$ and the supremum on the RHS is exactly what their prior, denoted by $\kappa_{\ell, \dd}(\eta)$.

The ReRo bounds are a function of the prior success probability of the \emph{heaviest} item in the target's data distribution, which may be very far from the prior success probability of a real attack. %
As \citet{cohen2025data} point out, this can lead to cases where a sophisticated reconstruction of a relatively low-probability data record doesn't contradict the ReRo bound. Our bounds avoid this problem by computing the prior with respect to a specific attack attempt. In later work, Stock et al.~\cite{stock2022defending} derived a similar RDP reconstruction bound which has the same limitations as ReRo.

\paragraph{Distributional Reconstruction Robustness.} Cummings et al.~\cite{cummings2024attaxonomy} extend the pure-DP ReRo bounds to an attack setting with $n$ targets and one attack attempt ($k=1$). While they capture settings where the adversary has substantially less information about the dataset, their bounds suffer from the same problems as ReRo. Namely, the prior is still dependent on the heaviest item (the optimal a priori attack) and the bounds only give an expectation over mechanism outputs.

\subsection{Additional Related Work} Our work fits into a substantial body of previous work~\cite{balle2022reconstructing,stock2022defending,hayes2024bounding,steinke2024privacy,cherubin2024closed,cummings2024attaxonomy,mahloujifar2024auditing,kulynych2025unifying} that  bounds the success probability of particular attacks based on a DP guarantee.
These works consider several broad categories of attacks. \textit{Membership inference (MI)}~\cite{shokri2017membership}---in which an attacker given (partial) knowledge of a target individual and of the population from which the data are drawn aims to determine if the target individual's records was in the input data set---is the attack category that most closely tracks the negation of DP's guarantees.  \textit{Attribute inference (AI)}~\cite{ganju2018property} is similar but one aims to infer a specific sensitive attribute of the target, again given partial knowledge. \textit{Reconstruction}~\cite{dinur2003revealing} is the broadest class, encompassing multiple settings where the attacker infers one or more whole records.
In Table~\ref{tab:compare_prior_work} we compare our bounds to those of prior work.

While we focus here on works that aim to help interpret DP guarantees, many of the works we discuss also aim to check the quantitative tightness of the DP guarantee for a specific algorithm~\cite{steinke2024privacy,mahloujifar2024auditing}.%
\footnote{Such  assessment is often referred to as \textit{auditing}~\cite{ding2018detecting,jagielski2020auditing,nasr2023tight,steinke2024privacy,mahloujifar2024auditing,annamalai2025hitchhiker}. Mathematically, results on auditing overlap significantly with the type of bounds considered in this paper, but the goals and interpretation differ. Our comparison to previous work includes bounds motivated by both considerations.}

\end{document}